\def \e{\mathbf{e}}
\def \u{\mathbf{u}}
\def \0{\mathbf{0}}
\def \1{\mathbf{1}}
\def \B{\mathbf{B}}
\def \C{\mathbf{C}}
\def \d{\mathbf{d}}
\def \H{\mathbf{H}}
\def \I{\mathbf{I}}
\def \K{\mathbf{K}}
\def \M{\mathbf{M}}
\def \P{\mathbf{P}}
\def \R{\mathbf{R}}
\def \S{\mathbf{S}}
\def \bK{\bar{\mathbf{K}}}
\def \hK{\hat{\mathbf{K}}}
\def \hG{\hat{G}}
\def \bG{\bar{G}}
\def \hg{\hat{g}}
\def \bg{\bar{g}}
\def \cP{\mathcal{P}}
\def \cF{\mathcal{F}}
\def \cK{\mathcal{K}}
\def \cA{\mathcal{A}}
\def \cU{\mathcal{U}}
\def \bR{\mathbb{R}}
\def \E{\textrm{E}}
\def \Cov{\textrm{Cov}}
\def \Var{\textrm{Var}}
\def \LQ{\textrm{LQ}}
\def \MV{\textrm{MV}}
\newtheorem{lemma}{Lemma}
\newtheorem{proposition}{Proposition}
\newtheorem{theorem}{Theorem}
\theoremstyle{definition}
\newtheorem{assumption}{Assumption}
\newtheorem{example}{Example}
\begin{document}
%
\title{Explicit Solution for Constrained Stochastic
Linear-Quadratic Control with Multiplicative Noise}
%
%
%

\author{Weiping Wu ,~\IEEEmembership{Student Member,~IEEE,}
        Jianjun Gao,~\IEEEmembership{Member,~IEEE,}\\
        Duan Li,~\IEEEmembership{Senior~Member,~IEEE,}~
        Yun Shi

\thanks{W.~P.~Wu is with the Automation Department, Shanghai Jiao Tong University, Shanghai, China (email: godream@sjtu.edu.cn).}
\thanks{J.~J.~Gao is with School of Information Management and Engineering, University of Finance and Economics, Shanghai, China (e-mail:gao.jianjun@shufe.edu.cn).}
\thanks{D.~Li is with the Department of Systems Engineering  Engineering Management, The Chinese University of Hong Kong, Hong Kong (email: dli@se.cuhk.edu.hk).}
\thanks{Y.~Shi is with the School of Management, Shanghai University, Shanghai China, (email:y\_shi@shu.edu.cn).}
\thanks{This research work was partially supported by National Natural Science Foundation of China under grant 61573244, and by the Research Grants Council of Hong Kong under grant 14213716. }
}
%
%

\markboth{Journal of \LaTeX\ Class Files,~Vol.~xx, No.~x, August~2015}%
{Shell \MakeLowercase{\textit{et al.}}: Bare Demo of IEEEtran.cls for IEEE Journals}
%



\maketitle

\begin{abstract}
We study in this paper a class of constrained linear-quadratic (LQ) optimal control problem formulations for the scalar-state stochastic system with multiplicative noise, which has various applications, especially in the financial risk management. The linear constraint on both the control and state variables considered in our model destroys the elegant structure of the conventional LQ formulation and has blocked the derivation of an explicit control policy so far in the literature. We successfully derive in this paper the analytical control policy for such a class of problems by utilizing the state separation property induced from its structure. We reveal that the optimal control policy is a piece-wise affine function of the state and can be computed off-line efficiently by solving two coupled Riccati equations. Under some mild conditions, we also obtain the stationary control policy for infinite time horizon. We demonstrate the implementation of our method via some illustrative examples and show how to calibrate our model to solve dynamic constrained portfolio optimization problems.
\end{abstract}

\begin{IEEEkeywords}
Constrained linear quadratic control, stochastic control, dynamic mean-variance portfolio selection.
\end{IEEEkeywords}

%
\IEEEpeerreviewmaketitle

\section{Introduction}
\IEEEPARstart{W}e study in this paper the constrained linear-quadratic (LQ) control problem for the discrete-time stochastic scalar-state system with multiplicative noise. The past few years have witnessed intensified attention on this subject, due to its promising applications in different areas, including dynamic portfolio management, financial derivative pricing, population model, and nuclear heat transfer (see, e.g., \cite{GaoLiCuiWang:2015}\cite{Costabook:2007}\cite{Primb:2009}).

There exist in the literature various studies on the estimation and control problems of systems with a multiplicative noise \cite{Basin:2006}\cite{Gershon:2006}. As for the LQ type of stochastic optimal control problems with multiplicative noise, investigations have been focused on the LQ formulation with indefinite penalty matrices on control and state variables for both continuous-time and discrete-time models (see, e.g., \cite{LimZhou:1999}\cite{LiZhouRami:2003}\cite{Zhu:2005}\cite{RamiChenMooreZhou:2001}\cite{Costa:2007}).
One interesting finding is that even when the penalty matrices for both state and control are indefinite, this kind of models with multiplicative noise is still well-posed under some conditions. One important application of this kind of models arises in the dynamic mean-variance (MV) portfolio analysis \cite{LiNg:2000} \cite{ZhouLi:2000}, which generalizes the Markowitz's classical work \cite{Markowitz:1969} on static portfolio selection. Please see, e.g., \cite{GaoLiCuiWang:2015}\cite{CuiLiLi:2014} for some detailed surveys on this subject which
has grown significantly in recent years.


One prominent attractiveness of the LQ type of optimal control models is its explicit control policy which can be derived by solving the correspondent Riccati equation. However, in real applications, due to some physical limits, consideration of the risk or the economic regulation restrictions, some constraints on the control variables have to be taken into the consideration. Unfortunately, when some control constraints are involved, except for a few special cases, there is hardly a closed-form control policy for the constrained LQ optimal control model. As for the deterministic LQ control problem, Gao et al. \cite{GaoLi:2011} investigate the LQ model with cardinality constraints on the control and derive a semi-analytical solution. For the model with inequality constraints on state and control, Bemporad et al. \cite{Bemporad:2002} propose a method by using a parametric programming approach to compute an explicit control policy. However, this method may suffer a heavy computational burden when the size of the problem is increasing. When the problem only involves the positivity constraint for the control, some scholarly works \cite{Campbell:1982}\cite{HeemelsEijndhovenStoorvogel:1998} provide the optimality conditions and some numerical methods to characterize the optimal control policy. Due to the difficulties in characterizing the explicit optimal control, it is more tractable to develop some approximated control policy by using, for example, the Model Predictive Control (MPC) approach \cite{Mayne:2014}\cite{Kouvaritakis:2014}. The main idea behind the MPC is to solve a sub-problem with finite horizon at each time period for an open-loop control policy and implement such a control in a fashion of rolling horizon.  As only a static optimization problem needs to be solved in each step, this kind of model can deal with general convex constraints. As for the stochastic MPC problem with multiplicative noise, Primbs et al.\cite{Primb:2009} propose a method by using the semi-definite programming and supply a condition for the system stability. Bernardini and Bemporad \cite{BernardiniBemporad:2012} study a similar problem with discrete random scenarios and propose some efficient computational methods by solving quadratically constrained quadratic programming problems off-line. Patrinos et al. \cite{Patrinos:2014} further extend such a method to solve the problem with Markovain jump. Readers may refer \cite{ChanHsuSethi:2002}\cite{Mesbah:2016} for more complete surveys of MPC for stochastic systems.

The current literature lacks progress in obtaining an explicit solution for the constrained stochastic LQ type optimal control problems. However, some promising results have emerged recently for dynamic MV portfolio selection, a special class of such problems. Li et al. \cite{LiZhouLim:2002} characterize the analytical solution of the continuous-time MV portfolio selection problem with no shorting by using the viscosity solution of the partial differential equation.  The work by Hu and Zhou \cite{HuZhou:2005} solve the cone constrained continuous-time LQ control problem with a scalar state by using  the backward stochastic differential equation (BSDE) approach. Cui et al. \cite{CuiGaoLiLi:2014} \cite{CuiLiLi:2015} solve the discrete-time version of this type of problems with no shorting constraint and cone constraints, respectively.  Note that the models studied in \cite{CuiGaoLiLi:2014} \cite{CuiLiLi:2015} are just some special cases of our model studied in this paper. Gao et al. \cite{GaoLiCuiWang:2015} derive the solution for the dynamic portfolio optimization model with cardinality constraint with respect to the active periods in time.

In this paper, we focus on the constrained LQ optimal control for the scalar-state stochastic system with multiplicative noise. The contributions of our work include several aspects. First, we derive the analytical control law of this type of problem with a general class of general linear constraints, which goes beyond the cone constraints studied in \cite{HuZhou:2005}\cite{CuiLiLi:2015}. This general constraint also includes positivity and negativity constraints, state-dependent upper and lower bound constraints as its special cases. We show that the control policy is a piece-wise affine function with respect to the state variable, which can be characterized by solving  two coupled Riccati equations with two unknowns. Second, we extend such results to the problem with infinite horizon. We provide the condition on the existence of the solution for the correspondent algebraic Riccati equations and show that the closed-loop system is asymptotically stable under the stationary optimal control. Besides the theoretical study, we illustrate how to use this kind of models to solve the constrained dynamic mean-variance portfolio optimization.

The paper is organized as follows. Section \ref{se_fomulation} provides the formulations of the stochastic LQ control problem with control constraints for both finite and infinite horizons. Section \ref{se_sol_P} and Section \ref{se_sol_Pinf} develop the explicit solutions for these two problems, respectively. Section \ref{se_mv} illustrates how to apply our method to solve the constrained dynamic mean-variance portfolio selection problem. Section \ref{se_exampl} presents some numerical examples to demonstrate the effectiveness of the proposed solution schemes. Finally, Section \ref{se_conclusion} concludes the paper with some possible further extensions.

\textit{Notations} The notations $\0_{n\times m }$ and $\I_{n}$ denote the $n \times m$ zero matrix and the $n\times n$ identity matrix, respectively, $\R\succeq 0$ ($\R\succ 0$) denotes a positive semidefinite (positive definite) matrix, and $\bR$ ($\bR_+$) denotes the set of real (nonnegative real) numbers. We denote by $\1_{\mathcal{A}}$ the indicator function such that $\1_{\mathcal{A}}=1$ if the condition $\mathcal{A}$ holds true and $\1_{\mathcal{A}}=0$ otherwise. Let $\E_t[\cdot]$ be the conditional expectation $\E[\cdot|\cF_t]$ with $\cF_t$ being the filtration (information set) at time $t$. For any problem $(\cP)$, we use $v(\cP)$ to denote its optimal objective value.

\section{Problem Formulations}\label{se_fomulation}

\subsection{Problem with Finite Time Horizon}\label{sse_finite}
In this work, we consider the following scalar-state discrete-time linear stochastic dynamic system,
\begin{align}
x_{t+1} = A_t x_t + \B_t \u_t,~t = 0, 1, \cdots, T-1, \label{def_xt}
\end{align}
where $T$ is a finite positive integer number, $x_t$ $\in $ $\bR$ is the state with $x_0$ being given, $\u_t$ $\in$ $\bR^n$ is the control vector, $\{A_t\in\bR\}|_{t=0}^{T-1}$ and $\{\B_t \in\bR^{1\times  n}\}|_{t=0}^{T-1}$ are random system parameters. In the above system model, all the randomness are modeled by a completely filtrated probability space $\{\Omega,\{ \cF_t\}|_{t=0}^T, \mathbb{P}\}$, where $\Omega$ is the event set, $\mathcal{F}_t$ is the filtration of the information available at time $t$ with $\cF_0=\{\emptyset, \Omega\}$, and $\mathbb{P}$ is the probability measure. More specifically, at any time $t\in \{1, \cdots, T\}$, the filtration $\cF_t$ is the smallest sigma-algebra generated by the realizations of $\{A_k\}|_{k=0}^{t-1}$ and $\{\B_k\}|_{k=0}^{t-1}$. That is to say, in our model, the random parameters $A_t$ and $\B_t$ are $\cF_{t+1}$ measurable for any $t=0,\cdots, T-1$.\footnote{The random structure we adopt in this paper has been commonly used in the area of financial engineering \cite{FollmerSchied:2004}. This kind of models is very general, since it does not need to specify the particular stochastic processes which $A_t$  and $\B_t$ follow.} To simplify the notation, we use $\E_t[\cdot]$ to denote the conditional expectation with respect to filtration $\cF_t$. To guarantee the well-posedness of the model, we assume all $A_t$ and $\B_t$ are square integrable, i.e., $\E_t[ |A_t|^2]$$<$$\infty$ and $\E_t[\|\B_t\|^2]$$<$ $\infty$ for all $t$ $=$ $0$, $\cdots$, $T-1$.

Note that the above stochastic dynamic system model is very general. For example, it covers the traditional stochastic uncertain systems with a scalar state space and multiplicative noise \cite{Primb:2009}\footnote{In \cite{Primb:2009} and many related literatures, the stochastic dynamic system is modeled as $x_{t+1}$ $=$ $ Ax_k+Bu_k$ $+$ $\sum_{j=1}^q[C_jx_k+D_ju_k]\omega_{k}^j$, where $\omega_{k}^j$ are i.i.d random variables for different $k$ with zero mean and $\E[(\omega_k^j)^2]=1$ and $\E[\omega_k^i\omega_k^j]=0$, if $i\not=j$.}. It also covers the cases in which $\{A_t,\B_t\}|_{t=0}^{T-1}$ are serially correlated stochastic processes such as Markov Chain models \cite{Costa:2007}\cite{Costa:2012} or the conventional time series models, which have important applications in financial decision making \cite{GaoLiCuiWang:2015}\cite{Costa:2008}. As for the control variables $\{\u_t\}|_{t=0}^{T-1}$, they are required to be $\cF_t$-measurable, i.e., the control $\u_t$ at time $t$ only depends on the information available up to time $t$. Furthermore, motivated by some real applications, we consider the following general control constraint set,
\begin{align}
\cU_t(x_t)=\{&\u_t|\textrm{$\u_t$ is $\cF_t$-measurable},~ \H_t \u_t \leq \d_t |x_t|\}, \label{def_Ut}
\end{align}
for $t=0,\cdots,T-1$, where $\H_t\in \bR^{m\times n}$ and $\d_t\in \bR^m$ are deterministic matrices and vectors\footnote{Since both $x_t$ and $u_t$ are random variables (except $x_0$) for all $t>0$, the inequalities given in (\ref{def_Ut}) should be held \textit{almost surely}, i.e., the inequalities are held for the cases with non-zero probability measure. To simplify the notation, we do not write out the term `almost surely' explicitly in this paper.}. Note that set ($\ref{def_Ut}$) enables us to model various control constraints as evidenced from the following:
\begin{itemize}
  \item the nonnegativity (or nonpositivity)  constraint case, $\u_t\geq \0_{n\times 1}$ (or $\u_t\leq \0_{n\times 1}$) by setting $\H_t=-\I_{n}$ (or $\H_t=\I_{n}$) and $\d_t=\0_{n\times 1}$  in (\ref{def_Ut});

  \item the constrained case with state-dependent upper and lower bounds, $\underline{\d}_t|x_t|$ $\leq$ $\u_t$ $\leq$ $\overline{\d}_t |x_t|$ for some $\underline{\d}_t$$\in$$ \R^{n}$ and $\overline{\d}_t$$ \in $$\R^n$ by setting $\H_t$ $=$ $\left(\begin{array}{c}
                                                               \I_{n} \\
                                                               -\I_{n }
                                                             \end{array}\right)
      $ and $\d_t$ $=$ $\left(
                    \begin{array}{c}
                      \overline{\d}_t \\
                      -\underline{\d}_t \\
                    \end{array}
                  \right)$;
  \item the general cone constraint case, $\H_t \u_t \geq \0_{m\times 1}$, for some $\H_t$;
  \item unconstrained case, $\u_t\in \bR^n$, by setting $\H_t=\0_{m\times n}$ and $\d_t=\0_{m\times 1}$.
\end{itemize}
To model the cost function, we introduce the following deterministic parameters,  $\{\R_t \in \bR^{n\times n}|\R_t\succeq 0\}|_{t=0}^{T-1}$, $\{\S_t\in \bR^{n}\}|_{t=0}^{T-1}$ and $\{ q_t\geq 0\}|_{t=0}^T$, which can be further written in more compact forms, $\C_t$ $:=$ $\left(\begin{array}{cc}
\R_t          & \S_t\\
\S_t^{\prime} & q_t
\end{array}\right)$ for $t$ $=$ $0$, $\cdots$, $T-1$ and $\C_T$ $:=$ $\left(\begin{array}{cc}
\0_{n\times n} & \0_{n\times 1}\\
\0_{1\times n} & q_T
\end{array}\right)$.
Overall, we are interested in the following class of inequality constrained stochastic LQ control problem (ICLQ),
\begin{align}
(\cP_{\LQ}^T)&~\min_{\{\u_t\}|_{t=0}^{T-1}}~\E_0 \left[ \sum_{t=0}^{T}
\left(\begin{array}{cc}
\u_t\\
x_t
\end{array}\right)^{\prime}\C_t
\left(\begin{array}{cc}
\u_t\\
x_t
\end{array}\right)~\right] \label{def_J}\\
{\rm s.t.}~&\{x_t,\u_t\}~\textrm{satisfies (\ref{def_xt}) and (\ref{def_Ut}) for $t=0,\cdots,T-1$}.\notag
\end{align}
To solve problem $(\cP_{\LQ}^T)$, we need the following assumption.

\begin{assumption}\label{asmp_psd}
$\C_t\succeq 0$ for $t=0,\cdots, T$, and $\Cov_t[\B_t]\succ 0$ for all $t=0,\cdots, T-1$.
\end{assumption}
Assumption \ref{asmp_psd} guarantees the convexity of problem $(\cP_{\LQ}^T)$. Assumption \ref{asmp_psd} can be regarded as a generalization of the one widely used in the mean-variance portfolio selection, which requires $\Cov_t[\B_t]\succ 0$ (Please see, for example, \cite{GaoLiCuiWang:2015} for detailed discussion). Also Assumption \ref{asmp_psd} is looser than the one used in \cite{ChenKouWang:2016}, which requires path-wise positiveness of the random matrix. Note that, since $\Cov_t[\B_t]$ $:=$ $\E_t[\B_t^{\prime}\B_t]$ $-$ $\E_t[\B_t^{\prime}]\E_t[\B_t]$,  Assumption \ref{asmp_psd} implies $\E_t[\B_t^{\prime}\B_t]$$\succ$$ 0$ for $t=$ $0$, $\cdots$, $T-1$.

\subsection{Problem with Infinite Time Horizon}\label{sse_infinite}
We are also interested in a variant of problem $(\cP_{\LQ}^T)$ with infinite time horizon. More specifically, we want to investigate the stationary control policy and long-time performance for infinite time horizon. In such an infinite time horizon, we assume that all the random parameters, $\{A_k\}|_{k=0}^{\infty}$ and $\{\B_k\}|_{k=0}^{\infty}$, are independent and identically distributed (i.i.d) over different time periods. Thus, we drop the index of time and simply use the random variable $A$ and random vector $\B$ to denote the random parameters, which leads to the following simplified version of dynamic system (\ref{def_xt}),
\begin{align}
x_{t+1}=A x_t +\B\u_t,~~~t=0,1,\cdots,T-1,\label{def_xt_inf}
\end{align}
where the system parameters $A$ and $\B$ are random with known joint distribution. As for the constraint (\ref{def_Ut}), we also assume that all $\H_t$ and $\d_t$ are fixed at $\H$ and $\d$, respectively, which leads to the following simplified version of constraint (\ref{def_Ut}),
\begin{align}
\cU_t(x_t)=\{\u_t~|~\textrm{$\u_t \in \bR^n$ },~ \H \u_t \leq \d |x_t|\}, \label{def_Ut_inf}
\end{align}
for $t=$ $0$, $1$, $\cdots$, $\infty$. To guarantee the feasibility of the constraint, we impose the following assumption.
\begin{assumption}\label{asmp_Ut}
The set $\cU_t(0)$ $=$ $\{ \u\in \bR^n |~ \H \u \leq \0_{m\times 1}\}$ is nonempty.
\end{assumption}
Note that $\cU_t(0)$ is independent of $x_t$ and Assumption \ref{asmp_Ut} implies that the feasible set $\cU_t(x_t)$ is nonempty for any $|x_t|>0$. We also set all penalty matrices $\C_t$, $t$ $=$ $0$, $\cdots$, $\infty$, at $\C$ $:=$ $\left(\begin{array}{cc}
\R          & \S\\
\S^{\prime} & q
\end{array}\right)$. We consider now the following ICLQ problem with infinite time horizon,
\begin{align*}
(\cP_{\LQ}^{\infty})~&~\min_{\{\u_t\}|_{t=0}^{\infty}}~\E \left[ \sum_{t=0}^{\infty}
\left(\begin{array}{cc}
\u_t\\
x_t
\end{array}\right)^{\prime}\C
\left(\begin{array}{cc}
\u_t\\
x_t
\end{array}\right) \right]\\
\textrm{s.t.}~&~\{x_t,\u_t\} \textrm{~satisfies (\ref{def_xt_inf}) and (\ref{def_Ut_inf}) for $t=0,\cdots,\infty$}.
\end{align*}
Note that the expectation in problem $(\cP_{\LQ}^{\infty})$ is an unconditional expectation, since $\{A, \B\}$  are independent over time.\footnote{For model $(\cP_{\LQ}^{\infty})$, since $\{A, \B\}$  are independent over time, we just simplify the notation $\E_0[\cdot]$ to $\E[\cdot]$.} For problem $(\cP_{\LQ}^{\infty})$, we need to strengthen Assumption  \ref{asmp_psd} by requiring $\C$ to be positive definite as follows,
\begin{assumption}\label{asmp_C}
$\C\succ 0$ and $\Cov[\B]$ $=$ $\E[\B^{\prime}\B]$
$-$ $\E[\B^{\prime}]\E[\B]$ $\succ$ $0$
\end{assumption}

\section{Solution scheme for problem ($\cP_{\LQ}^T$)}\label{se_sol_P}

In this section, we first reveal an important result of state separation for our models and then develop the solution for problem $(\cP_{\LQ}^T)$.

\subsection{State Separation Theorem}\label{sse_state_separation}
To derive the explicit solution of problem $(\cP_{\LQ}^T)$, we first introduce the following sets associated with the control constraint set $\cU_t(x_t)$, $\mathcal{K}_t$$:=$$\{$$\K$$\in$$\bR^n$$~|~$$\H_t \K $$\leq$$ \d_t$$\}$,
for $t=0,\cdots, T-1$. For problem $(\cP_{\LQ}^T)$, we further introduce three auxiliary optimization problems, $(\cP_t)$, $(\hat{\cP}_t)$ and $(\bar{\cP}_t)$ for  $t={0,1,\cdots,T-1}$  as follows,
\begin{align}
(\cP_t) ~&~ \min_{\u \in \cU_t}~g_t(\u,x,y,z), \label{def_Pt}\\
(\hat{\cP}_t)~&~\min_{\K \in \cK_t}~\hg_t(\K, y, z), \notag\\
(\bar{\cP}_t)~&~\min_{\K \in \cK_t}~\bg_t(\K, y,z), \notag
\end{align}
where $x\in  \bR$ is $\cF_t$-measurable random variable, $y\in \bR_+$ and $z \in \bR_+$ are $\cF_{t+1}$-measurable random variables and $g_t(\u,x,y,z):$ $\bR^n\times\bR \times \bR_+ \times \bR_+$ $\rightarrow$ $\bR$, $\hg_t(\K,y,z):$ $\bR^n \times \bR_+ \times \bR_+$ $\rightarrow$ $\bR$ and $\bg_t(\K,y,z):$ $\bR^n \times \bR_+ \times \bR_+$ $\rightarrow$ $\bR$, are respectively defined as
\begin{align}
&g_t(\u,x, y,z):=\E_t \Big[ \left(\begin{array}{cc}
\u\\
x
\end{array}\right)^{\prime}\C_t
\left(\begin{array}{cc}
\u\\
x
\end{array}\right)+(A_tx+\B_t\u)^2 \notag\\
&~~~~\times \big(y\1_{ \{A_tx +\B_t\u \geq 0\}}
+z\1_{\{A_tx+\B_t \u <0\}}\big)\Big], \label{def_gt}\\
&\hg_t(\K, y, z):=\E_t \Big[ \left(\begin{array}{cc}
		\K\\
		1
	\end{array}\right)^{\prime}\C_t
	\left(\begin{array}{cc}
		\K\\
		1
	\end{array}\right)+(A_t+\B_t \K)^2\notag\\
&~~~~\times \big( y\1_{\{A_t+\B_t \K \geq 0\}}
+ z\1_{\{A_t+\B_t \K <0\}}\big)\Big], \label{def_hgt}\\
&\bg_t(\K,y,z):=
	\E_t \Big[ \left(\begin{array}{cc}
	-\K\\
	1
	\end{array}\right)^{\prime}\C_t
	\left(\begin{array}{cc}
	-\K\\
	1
	\end{array}\right)+(A_t-\B_t  \K)^2\notag\\
&~~~~\times\big(y\1_{\{A_t-\B_t \K \leq 0\}}
+z\1_{\{A_t-\B_t \K > 0\}} \big)\Big].\label{def_bgt}
\end{align}
Since $\C_t \succeq 0$, it always holds true that $g_t(\u,x,y,z)\geq 0$, $\hg(\K,y,z)\geq 0$ and $\bg(\K,y,z)\geq 0$. Before we present the main result, we present the following lemma.
\begin{lemma}\label{lem_convex}
The function $g_t(\u,x,y,z)$ is convex with respect to $\u$, and both $\hg_t(\K, y,z )$ and $\hg_t(\K,y,z)$ are convex functions with respect to $\K$.
\end{lemma}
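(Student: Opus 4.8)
The plan is to establish convexity of each of the three functions by decomposing them into a manifestly convex quadratic part plus a term of the form $\E_t[(\text{affine})^2 \cdot (\text{state-dependent weight})]$, and then showing that this latter term is convex even though the indicator functions partition the domain. The key observation is that for any fixed realization of $(A_t, \B_t)$, the scalar quantity $w := A_t x + \B_t \u$ (respectively $A_t + \B_t\K$ or $A_t - \B_t\K$) is affine in the decision variable, and the weighted square $w^2(y\1_{\{w\geq 0\}} + z\1_{\{w<0\}})$ is, as a function of the real variable $w$, exactly the piecewise-quadratic function $\phi_{y,z}(w) := y\,w^2$ for $w\geq 0$ and $z\,w^2$ for $w<0$. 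Since $y,z \in \bR_+$, each piece is a convex parabola through the origin with a common value, slope $0$, at $w=0$, so $\phi_{y,z}$ is convex on all of $\bR$.

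**First I would** prove the auxiliary fact that $\phi_{y,z}(w) = y w^2 \1_{\{w\geq 0\}} + z w^2 \1_{\{w<0\}}$ is convex in $w$ for $y,z\geq 0$. This is elementary: both branches are convex, they agree in value and first derivative at the join point $w=0$ (both give $0$), and a piecewise-$C^1$ function built from convex pieces that match to first order is convex. Equivalently one checks that $\phi_{y,z}(w) = \max(y,z)\,w^2 - (\max(y,z)-y)\,(w^+)^2 \cdots$ — but the cleanest route is just the matching-derivative argument. With $\phi_{y,z}$ convex and $w$ an affine function of $\u$ (for each fixed sample point), the composition $\phi_{y,z}(A_t x + \B_t \u)$ is convex in $\u$ pointwise in $\omega$.

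**Next I would** assemble the full function. The leading term $\left(\begin{smallmatrix}\u\\x\end{smallmatrix}\right)'\C_t\left(\begin{smallmatrix}\u\\x\end{smallmatrix}\right)$ is convex in $\u$ because $\C_t\succeq 0$ by Assumption~\ref{asmp_psd}, hence its $\u\u$-block $\R_t$ is positive semidefinite. Thus, for each realization, the integrand of $g_t$ is a sum of two convex functions of $\u$ and is therefore convex in $\u$. Since $\E_t[\cdot]$ is a nonnegative-weighted integral (a conditional expectation), and expectation preserves convexity, $g_t(\cdot,x,y,z)$ is convex in $\u$. The identical argument applies verbatim to $\hg_t$ and $\bg_t$ with $\u$ replaced by $\K$: in $\bg_t$ the relevant affine quantity is $A_t - \B_t\K$ and the indicators flip the roles of the weights $y,z$, but since $\phi$ is convex regardless of which nonnegative weight sits on which half-line, the conclusion is unchanged.

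**The main obstacle** — really the only subtle point — is justifying that the presence of the indicator functions does not break convexity, since indicators themselves are discontinuous and the naive worry is that splitting a quadratic at a moving threshold could create a nonconvex kink. The resolution is precisely that the split occurs at the zero of $w$ and the two parabolas are tangent there; the function $\phi_{y,z}$ has no downward kink because both pieces have zero slope at the origin, so the effective threshold inside each sample's integrand is always at $w=0$ and contributes a convex join. I would make sure to state clearly that convexity is verified \emph{pointwise in $\omega$} in the decision variable, with $x,y,z$ (and the random parameters) held fixed, before invoking linearity of $\E_t[\cdot]$ to pass the convexity through the conditional expectation.
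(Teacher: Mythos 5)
Your proof is correct, and it takes a genuinely different route from the paper's. The paper argues via second-order conditions: it differentiates under the conditional expectation (which requires a technical exchangeability condition, handled by a footnote and a citation), computes $\nabla^2_{\u}g_t(\u,x,y,z)$, and uses the inequalities $y\1_{\{y>z\}}\geq z\1_{\{y>z\}}$ and $z\1_{\{y\leq z\}}\geq y\1_{\{y\leq z\}}$ to bound the Hessian below by $2\E_t[\M_t]$, where $\M_t=\R_t+\bigl(y\1_{\{y\leq z\}}+z\1_{\{y>z\}}\bigr)\B_t^{\prime}\B_t$; Assumption~\ref{asmp_psd} then gives $\E_t[\M_t]\succ 0$, i.e.\ \emph{strict} convexity. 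Your argument is derivative-free: pointwise in $\omega$ the integrand is a convex quadratic plus the composition of the scalar convex function $\phi_{y,z}(w)=yw^2\1_{\{w\geq 0\}}+zw^2\1_{\{w<0\}}$ (convex because the two parabolic branches are tangent at $w=0$, so the derivative is continuous and nondecreasing) with an affine map of the decision variable, and convexity passes through $\E_t[\cdot]$ by linearity and monotonicity. Your route is cleaner in two respects: it avoids the differentiation-under-expectation issue entirely, and it avoids the delicate question of what the Hessian of a merely piecewise-$C^2$ function means across the switching surface $A_tx+\B_t\u=0$. What the paper's computation buys is the stronger conclusion of strict convexity and, as a byproduct, the explicit matrix $\M_t$ with $\E_t[\M_t]\succ 0$, both of which are reused later (case $x=0$ in the proof of Theorem~\ref{thm_sep}, where $g_t(\u,0,y,z)\geq \u^{\prime}\E_t[\M_t]\u$ forces the unique minimizer $\u^*=\0_{n\times 1}$). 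So your proof fully establishes the lemma as stated; if it were to replace the paper's proof, that later step would need a one-line supplement, e.g.\ noting that your weight $\min(y,z)$ yields the same lower bound $g_t(\u,0,y,z)\geq \u^{\prime}\E_t\bigl[\R_t+\min(y,z)\B_t^{\prime}\B_t\bigr]\u$, which is positive definite under Assumption~\ref{asmp_psd} when $\min(y,z)>0$.
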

\begin{proof}
Checking the gradient and Hessian matrix of function ${g}_t(\u, x, y, z)$ with respect to $\u$ gives rise to\footnote{In the following part, we need to compute the partial derivative of
function $\E_t[f(\u)]$ with respect to $\u$. Under some mild conditions, we can always compute the derivative by taking the derivative of $f(\u)$ inside the expectation first. The technical condition guarantees this exchangeability of differentiation and expectation can be found in Theorem 1.21 of \cite{Kellenberg}.}
\begin{align}
&\nabla_{\u}{g}_t(\u,x,y,z)=2\E_t \Big[(\R_t\u+\S_t x)
	+ \B_t^{\prime}\big(A_t x+\B_t \u \big) \notag \\
	&~\times~\big(y \1_{\{A_t x+\B_t \u  \geq 0\}}
    + z\1_{\{A_tx +\B_t \u<0\}}\big) \Big] \label{def_nabla_gt}
\end{align}
and
\begin{align}
&\nabla_{\u}^2 {g}_t(\u,x,y,z)=2\E_t \Big[ \big(\R_t+y \B_t^{\prime}\B_t
\big)\1_{\{A_tx+\B_t \u \geq 0\}} \notag \\
&+\big(\R_t + z\B_t^{\prime}\B_t\big)\1_{\{A_tx +\B_t\u <0\}}\Big]\notag\\
&=\E_t\Big[\R_t + \big( y\1_{\{y\leq z\}}
+ y\1_{\{y>z\}}\big )\B_t^{\prime}\B_t\1_{\{A_tx+\B_t\u \geq 0\}}\notag\\
&+\big(z\1_{\{y\leq z\}} + z\1_{\{y>z\}}\big)\B_t^{\prime}\B_t\1_{\{A_tx +\B_t \u< 0\}} \Big]
\label{lem_convex_eq1}.
\end{align}
Note that the following two inequalities always hold,
\begin{align}
&y\1_{\{y >z\}} >z\1_{\{y>z\}}, z\1_{\{y\leq z \}} \geq y\1_{\{y\leq z\}}.\label{lem_convex_eq2}
\end{align}
Using the above inequalities and noticing that $y$ $>$ $0$ and $z$ $>$ $0$, we can rearrange the terms of (\ref{lem_convex_eq1}) to reach the conclusion of $\nabla_{\u}^2 {g}_t(\u,x,y,z)$ $\succeq$ $\E_t\big[\M_t\big]$, where
\begin{align}
\M_t=\R_t + \big(y\1_{\{y\leq z\}} +z\1_{\{y >z\}}\big)\B_t^{\prime}\B_t. \label{def_Mt}
\end{align}
Assumption \ref{asmp_psd} implies $\E_t[\B_t^{\prime}\B_t]$ $\succ$ $ 0$  and $\R_t$ $\succeq$ $0$. Moreover, the term $\big(y \1_{\{y \leq z\}}$ $+$ $z\1_{\{y>z\}}\big)$ is a positive random variable. These conditions guarantee $\E_t[\M_t]\succ 0$, which further implies $\nabla_{\u}^2 {g}_t(\u, x,y,z)\succ 0$. That is to say, $g_t(\u,x,y,z)$ is a strictly convex function of $\u$. As we can also apply the same procedure to $\hg_t(\K, y, z)$ and $\bg_t(\K, y, z)$ to prove their convexity with respect to $\K$, we omit the detailed proofs here.
\end{proof}
One immediate implication of  Lemma \ref{lem_convex} is that all $(\cP_t)$, $(\hat{\cP}_t)$ and $(\bar{\cP}_t)$ are convex optimization problems, as their objective functions are convex and their constraints are linear with respect to the decision variables. We can see that problem $(\cP_t)$ depends on random state $x_t$, while  problems $(\hat{\cP}_t)$ and  $(\bar{\cP}_t)$ do not. That is to say, problems  $(\hat{\cP}_t)$ and  $(\bar{\cP}_t)$ can be solved off-line once we are given the specific description of stochastic processes of $\{A_t,\B_t\}|_{t=0}^{T-1}$. Furthermore, these two convex optimization problems can be solved efficiently by existing modern numerical methods. The following result illustrates the relationship among problems  $({\cP}_t)$, $(\hat{\cP}_t)$ and  $(\bar{\cP}_t)$, which plays an important role in developing the explicit solution for problem $(\cP_{\LQ}^T)$.

\begin{theorem}\label{thm_sep}
For any $x$ $\in$ $\bR$, the optimal solution for problem $(\cP_t)$ is
\begin{align}
\u^*(x)=\begin{dcases}
\hK x & \textrm{if}~~x\geq 0,\\
\bK x & \textrm{if}~~x<0,
\end{dcases}\label{thm_sep_ut}
\end{align}
where $\hK$ and $\bK$ are defined respectively as
\begin{align*}
\hK&=\arg~\min_{\K \in \cK_t}~\hg_t(\K, y, z),\\
\bK&=\arg~\min_{\K \in \cK_t}~\bg_t(\K, y, z),
\end{align*}
and the optimal objective value is
\begin{align}
v(\cP_t)=x^2 \big(\hg_t(\hK,y,z) \1_{ \{ x \geq 0 \} }+\bg_t(\bK,y,z) \1_{ \{ x <0 \} } \big).\label{thm_sep_obj}
\end{align}	
\end{theorem}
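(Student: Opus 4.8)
The plan is to exploit the homogeneity of $(\cP_t)$ in the state variable: the control $\u$ enters $g_t$ only through the ratio $\u/x$, the quadratic form scales like $x^2$, and the constraint $\H_t\u\le\d_t|x|$ scales like $|x|$. I would introduce a state-dependent change of variables that removes $x$ from the optimization and reduces $(\cP_t)$ to one of the state-free problems $(\hat{\cP}_t)$, $(\bar{\cP}_t)$ according to $\mathrm{sign}(x)$; the split is forced by the indicator $\1_{\{A_tx+\B_t\u\ge0\}}$, whose argument changes sign with $x$. The degenerate case $x=0$ is immediate, since then the constraint is $\H_t\u\le\0$ and $g_t(\u,0,y,z)\ge0$ is minimized by the feasible point $\u=\0$, consistent with (\ref{thm_sep_ut})--(\ref{thm_sep_obj}).

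For $x>0$ I substitute $\u=\K x$. Dividing $\H_t\u\le\d_t x$ by $x>0$ gives $\H_t\K\le\d_t$, so $\u\mapsto\K=\u/x$ is a bijection of $\cU_t(x)$ onto $\cK_t$. Factoring $x$ out of the quadratic form produces $x^2$ times the quadratic form appearing in $\hg_t$, and since $x>0$ the event $\{A_tx+\B_t\u\ge0\}$ coincides with $\{A_t+\B_t\K\ge0\}$, so that $g_t(\K x,x,y,z)=x^2\,\hg_t(\K,y,z)$. Because $x^2>0$ is constant in $\K$, minimizing $g_t$ over $\cU_t(x)$ is equivalent to minimizing $\hg_t$ over $\cK_t$; the minimizer is $\hK$, giving $\u^*=\hK x$ and $v(\cP_t)=x^2\hg_t(\hK,y,z)$.

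The case $x<0$ is the mirror image with every sign reversed. Here $|x|=-x$, so I would use $\u=-\K x$ (equivalently $\K=\u/|x|$), which again maps $\cU_t(x)$ bijectively onto $\cK_t$; moreover $A_tx+\B_t\u=x(A_t-\B_t\K)$ carries the \emph{opposite} sign to $A_t-\B_t\K$, so the indicator inequalities swap and $g_t(-\K x,x,y,z)=x^2\,\bg_t(\K,y,z)$. The relevant sub-problem is thus $(\bar{\cP}_t)$ with minimizer $\bK$ and value $x^2\bg_t(\bK,y,z)$, which is the $x<0$ branch of (\ref{thm_sep_obj}). I note this substitution yields optimal gain $-\bK$, i.e.\ $\u^*=-\bK x$, which I would reconcile with the sign written in (\ref{thm_sep_ut}) through the definition of $\bg_t$, since $\bK\in\cK_t$ generally renders $\bK x$ infeasible when $x<0$.

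Finally, existence and uniqueness of $\hK$ and $\bK$ follow from Lemma \ref{lem_convex}: the Hessian bound $\nabla_{\u}^2 g_t\succeq\E_t[\M_t]\succ0$ transfers to $\hg_t$ and $\bg_t$, making them strongly convex and hence coercive, so each admits a unique minimizer over the nonempty closed polyhedron $\cK_t$. I expect the only genuine difficulty to be the careful bookkeeping of the sign of $A_tx+\B_t\u$ under the two substitutions: determining whether each indicator becomes a $\ge$ or a $\le$, and in which direction the constraint flips, is precisely what distinguishes $\hg_t$ from $\bg_t$ and selects the active branch of (\ref{thm_sep_ut}).
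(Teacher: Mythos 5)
Your proof is correct, and it reaches the result by a genuinely different route than the paper. The paper's proof is a verification argument: relying on Lemma \ref{lem_convex}, it invokes the first-order variational inequality $\nabla_{\u}g_t(\u^*,x,y,z)^{\prime}(\u-\u^*)\geq 0$ for all $\u\in\cU_t$, which is sufficient for optimality in the convex problem $(\cP_t)$, and then checks that $\u^*=x\hK$ (when $x>0$) and $\u^*=-x\bK$ (when $x<0$) satisfy it as a consequence of the variational inequalities satisfied by $\hK$ and $\bK$ for $(\hat{\cP}_t)$ and $(\bar{\cP}_t)$; the case $x=0$ is settled exactly as you do, via $g_t(\u,0,y,z)\geq \u^{\prime}\E_t[\M_t]\u$ with $\E_t[\M_t]\succ 0$. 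Your argument instead establishes the exact identities $g_t(\K x,x,y,z)=x^2\,\hg_t(\K,y,z)$ for $x>0$ and $g_t(-\K x,x,y,z)=x^2\,\bg_t(\K,y,z)$ for $x<0$, together with the fact that $\u\mapsto\u/|x|$ maps $\cU_t(x)$ bijectively onto $\cK_t$, so that $(\cP_t)$ and the state-free problems are literally equivalent up to the positive constant $x^2$. This buys something: no gradient computations and no appeal to sufficiency of first-order conditions are needed, and the equivalence shows that \emph{every} optimizer of $(\cP_t)$ has the stated form rather than exhibiting one candidate, with convexity entering only to guarantee that $\hK$ and $\bK$ exist and are unique (your strong-convexity/coercivity remark handles that, granted $\cK_t\neq\emptyset$, which the theorem implicitly assumes by referring to the argmins). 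Finally, your sign observation is well taken: the correct $x<0$ branch is $\u^*=-\bK x$, which is what your substitution, the paper's own proof, and the statement of Theorem \ref{thm_PLQ} all produce; the expression ``$\bK x$ if $x<0$'' in (\ref{thm_sep_ut}) is a typo, since $\H_t\bK\leq\d_t$ does not imply $\H_t(\bK x)\leq \d_t|x|$ when $x<0$, so that point is in general infeasible.
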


\begin{IEEEproof}
Since problem $(\cP_t)$ is convex, the first-order optimality condition is sufficient to determine the optimal solution (see, e.g., Theorem 27.4 in \cite{Rockafellar:1970}). If $\u^*$ is the optimal solution, it should satisfy
\begin{align}
\nabla_{\u}g_t(\u^*,x,y,z)^{\prime}(\u-\u^*) \geq 0, ~\textrm{for}~\forall~\u \in \cU_t, \label{thm_sep_opt_ut}
\end{align}
where $\nabla_{\u}{g}_t(\u,x,y,z)$ is given in (\ref{def_nabla_gt}). Note that the condition (\ref{thm_sep_opt_ut}) depends on state $x$. Thus, we consider the following three different cases.

(i) We first consider the case of $x$ $>$ $0$. Let $\hK$ be the optimal solution of problem $(\hat{\cP}_t)$, which satisfies the following first-order optimality condition,
\begin{align}
\nabla_{\K}\hg_t(\hK, y, z)^{\prime}(\K-\hK) \geq 0, ~\textrm{for}~\forall~ \K \in \cK_t, \label{thm_sep_hatKt}
\end{align}
where $\nabla_{\K}\hg_t(\hK)$ is defined as
\begin{align}
&\nabla_{\K}\hg_t(\hK,y,z)=2\E_t \Big[(\R_t\hK+\S_t)
	+ \B_t^{\prime}(A_t + \B_t\hK) \notag \\
	&~~~~~\times\big(y\1_{\{A_t + \B_t\hK \geq 0\}} + z\1_{\{A_t+ \B_t\hK <0\}}\big)\Big].
\end{align}
If we let $\u^*=x \hK$, it is not hard to verify that $\u^*$ satisfies both the constraint $\u^* \in \cU_t$ and  the first-order optimality condition of $(\cP_t)$ by substituting $\u^*$ back to (\ref{thm_sep_opt_ut}) and using the condition (\ref{thm_sep_hatKt}). That is to say, $\u^*$ solves problem $(\cP_t)$ when $x>0$. Substituting $\u^*$ back to $(\ref{def_gt})$ gives the optimal objective value of ($\cP_t$) as $g_t(\u^*,x,y,z)$ $=$ $x^2 \hg_t(\hK,y,z)$.

(ii) For the case of $x<0$, we consider the first-order optimality  condition of problem $(\bg_t)$,
\begin{align*}
\nabla_{\K}\bg_t(\bK,y,z)^{\prime}(\K_t - \bK_t) \geq 0, \textrm{for}~\forall~ \K_t \in \cK_t,
\end{align*}
where $\nabla_{\K}\bg_t(\bK,y,z)$ is defined as,
\begin{align*}
&\nabla_{\K}\bg_t(\bK, y, z)=2\E_t \Big[(\R_t\bK+\S_t)
+ \B_t^{\prime}(A_t - \B_t \bK) \notag \\
&~~~~~\times\big(y\1_{\{ A_t - \B_t^{\prime}\bK \leq 0\}}
+z\1_{\{A_t- \B_t\bK >0\}}\big)\Big].
\end{align*}
Similarly, let $\u^*=-x \bK$. We can verify that $\u^*$ satisfies both the constraint $\u^*\in \cU_t$  and  the optimality condition (\ref{thm_sep_opt_ut}) of problem $(\cP_t)$. Thus, $\u^*$ solves problem $(\cP_t)$ for $x<0$ with the optimal objective value being $g_t(\u^*)$ $=$ $x^2 \bg_t(\bK)$.

(iii) When $x$ $=$ $0$, the objective function (\ref{def_gt}) of $(\cP_t)$ becomes
\begin{align}
g_t(\u,0,y,z)&=\E_t\Big[\u^{\prime}\R_t\u + \u^{\prime}\B_t^{\prime}\B_t\u\Big(y\1_{\{\B_t\u\geq 0\}}\notag\\
&+z\1_{\{\B_t\u< 0\}}\Big) \Big]. \label{thm_sep_eq1}
\end{align}
From the inequalities in (\ref{lem_convex_eq2}), we have
\begin{align*}
\R_t+\big(y\B_t^{\prime}\B_t \1_{\{\B_t\u\geq 0\}}+z\B_t^{\prime}\B_t\1_{\{\B_t\u< 0\}}\big) \succeq \M_t,
\end{align*}
where $\M_t$ is defined in (\ref{def_Mt}). Note that (\ref{thm_sep_eq1}) is bounded from below by
\begin{align*}
g_t(\u,x,y,z)|_{x=0}\geq \u^{\prime} \E_t[\M_t]\u \geq 0.
\end{align*}
As we have showed $\E_t[\M_t]\succ 0$, $g_t(\u^*,0,y,z)$ $=$ $0$ only when $\u^*=\0_{n\times 1}$. Clearly, $\u^*=\0_{n\times 1}$ also satisfies the constraint $\cU_t$. That is to say, $\u^*=\0_{n\times 1}$ solves problem $(\cP_t)$ when $x=0$. As a summary of the above three cases, the optimal solution of problem ($\cP_t$) can be expressed as (\ref{thm_sep_ut}) and the optimal objective value is given as (\ref{thm_sep_obj}).
\end{IEEEproof}

\subsection{Explicit solution for problem ($\cP_{\LQ}^T$)}\label{sse_sol}
With the help of Theorem \ref{thm_sep}, we can develop the explicit solution for problem $(\cP_{\LQ}^T)$. We first introduce the following two random sequences, $\hG_0$, $\hG_1$, $\cdots$, $\hG_T$ and $\bG_0$, $\bG_1$, $\cdots$, $\bG_T$, which are defined backward recursively as follows,
\begin{align}
\hG_t&:=\min_{\K_t \in \cK_t}~\hg_t(\K_t, \hG_{t+1},\bG_{t+1}), \label{def_hatGt}\\
\bG_t&:=\min_{\K_t \in \cK_t}~\bg_t(\K_t, \hG_{t+1},\bG_{t+1}), \label{def_barGt}
\end{align}
where $\hg_t(\cdots)$ and $\bg_t(\cdots)$ are defined respectively in (\ref{def_hgt}) and (\ref{def_bgt}) for $t= T-1,\cdots, 0$ with the boundary conditions of $\hG_T$ $=$ $\bG_T$ $=$ $q_T$. Clearly, $\hG_t$ and $\bG_t$ are $\cF_t$-measurable random variables.

\begin{theorem} \label{thm_PLQ}
The optimal control policy of problem $(\cP_{\LQ}^T)$ at time $t$ is a linear feedback policy£¬
\begin{align}
u_t^*(x_t)=\begin{dcases}
 \hK^*_t x_t &~\textrm{if}~x_t\geq 0,\\
- \bK^*_t x_t &~\textrm{if}~x_t <0,\\
\end{dcases}\label{thm_PLQ_ut}
\end{align}
where $\hK^*_t$ and $\bK^*_t$ are defined as,
\begin{align*}
\hK^*_t&=\arg \min_{\K_t\in \cK_t}~ \hg_t(\K_t, \hat{G}_{t+1}, \bar{G}_{t+1}),\\
\bK^*_t&=\arg \min_{\K_t\in \cK_t}~\bg_t(\K_t, \hat{G}_{t+1}, \bar{G}_{t+1}),
\end{align*}
where $\{\hG_t\}|_{t=0}^T$ and $\{\bG_t\}|_{t=0}^T$ are given in (\ref{def_hatGt}) and (\ref{def_barGt}), respectively, and $\hG_t>0$ and $\bG_t>0$ for $t=0,\cdots, T-1$. Furthermore, the optimal objective value of problem $(\cP_{\LQ}^T)$ is
\begin{align}
v(\cP_{\LQ}^T)=x_0^2\left(\hG_0 \1_{\{x_0\geq 0\}} + \bG_0 \1_{\{x_0<0 \}} \right).
\end{align}
\end{theorem}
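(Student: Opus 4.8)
The plan is to prove Theorem~\ref{thm_PLQ} by backward induction on $t$ through the dynamic programming principle, using the State Separation Theorem (Theorem~\ref{thm_sep}) to dispatch each single-stage subproblem. The induction hypothesis I would carry is that the optimal cost-to-go from time $t$, viewed as a function of the $\cF_t$-measurable state $x_t$, is
\begin{align*}
V_t(x_t)=x_t^2\big(\hG_t\1_{\{x_t\geq0\}}+\bG_t\1_{\{x_t<0\}}\big),
\end{align*}
with $\hG_t,\bG_t$ the $\cF_t$-measurable quantities defined recursively in (\ref{def_hatGt})--(\ref{def_barGt}), and that the stage-$t$ minimizer is the piecewise-linear feedback (\ref{thm_PLQ_ut}). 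The base case $t=T$ is immediate: there is no control and $\C_T$ has a vanishing block in the control coordinates, so the terminal cost collapses to $q_Tx_T^2$, which is precisely the stated boundary condition $\hG_T=\bG_T=q_T$.

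For the inductive step, assume the hypothesis holds at $t+1$. By the tower property and the Bellman recursion,
\begin{align*}
V_t(x_t)=\min_{\u_t\in\cU_t(x_t)}\E_t\Big[\left(\begin{array}{c}\u_t\\x_t\end{array}\right)'\C_t\left(\begin{array}{c}\u_t\\x_t\end{array}\right)+V_{t+1}(A_tx_t+\B_t\u_t)\Big].
\end{align*}
Substituting $x_{t+1}=A_tx_t+\B_t\u_t$ and the assumed form of $V_{t+1}$, the bracketed integrand coincides termwise with the one defining $g_t(\u_t,x_t,\hG_{t+1},\bG_{t+1})$ in (\ref{def_gt}), with $y=\hG_{t+1}$ and $z=\bG_{t+1}$. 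Because $\u_t$ is chosen $\cF_t$-measurably and the constraint in (\ref{def_Ut}) is imposed pointwise, the stage-$t$ minimization reduces exactly to the pointwise problem $(\cP_t)$, so Theorem~\ref{thm_sep} applies verbatim: it delivers the feedback $\hK^*_t x_t$ on $\{x_t\geq0\}$ and $-\bK^*_t x_t$ on $\{x_t<0\}$, and the value $x_t^2\big(\hg_t(\hK^*_t)\1_{\{x_t\geq0\}}+\bg_t(\bK^*_t)\1_{\{x_t<0\}}\big)$. Setting $\hG_t=\hg_t(\hK^*_t,\hG_{t+1},\bG_{t+1})$ and $\bG_t=\bg_t(\bK^*_t,\hG_{t+1},\bG_{t+1})$ reproduces (\ref{def_hatGt})--(\ref{def_barGt}) and closes the induction; evaluating at $t=0$ yields the claimed $v(\cP_{\LQ}^T)=x_0^2\big(\hG_0\1_{\{x_0\geq0\}}+\bG_0\1_{\{x_0<0\}}\big)$. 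Along the way I would also verify admissibility: $\hK^*_t,\bK^*_t$ are $\cF_t$-measurable (since $\hg_t,\bg_t$ are conditional expectations given $\cF_t$ and $\hG_{t+1},\bG_{t+1}$ are $\cF_{t+1}$-measurable), hence $\u_t^*$ is $\cF_t$-measurable, and $\H_t\hK^*_t\leq\d_t$ together with the sign of $x_t$ gives $\H_t\u_t^*\leq\d_t|x_t|$, so $\u_t^*\in\cU_t(x_t)$.

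The step I expect to be the main obstacle is the strict positivity $\hG_t>0,\bG_t>0$ for $t\leq T-1$, which is not merely decorative: it is exactly the hypothesis $y>0$, $z>0$ that Theorem~\ref{thm_sep} (through the Hessian bound $\nabla^2_{\u}g_t\succeq\E_t[\M_t]\succ0$ in Lemma~\ref{lem_convex}) requires in order to be invoked at the next stage down. I would therefore fold positivity into the same backward induction, using $\C_t\succeq0$ and $\E_t[\B_t'\B_t]\succ0$ from Assumption~\ref{asmp_psd}: once $\hG_{t+1},\bG_{t+1}>0$, the future-cost term in $\hg_t,\bg_t$ dominates $\min(\hG_{t+1},\bG_{t+1})\,\E_t[(A_t+\B_t\K)^2]$, whose strict convexity in $\K$ (Hessian $2\E_t[\B_t'\B_t]\succ0$) prevents it and the running-cost term from vanishing simultaneously at the optimizer outside degenerate configurations, which the nondegeneracy of $\Cov_t[\B_t]$ excludes. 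The delicate point is that this positivity must be \emph{generated} at the top of the recursion rather than inherited, since at the boundary one only has $\hG_T=\bG_T=q_T\geq0$; making the first backward step rigorous (and isolating whatever nondegeneracy of $\C_{T-1}$ or $q_T$ is genuinely needed) is where I would concentrate the careful analysis.
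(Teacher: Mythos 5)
Your proof follows the paper's own argument essentially verbatim: a backward dynamic-programming induction on the claim $V_t(x_t)=x_t^2\big(\hG_t\1_{\{x_t\geq0\}}+\bG_t\1_{\{x_t<0\}}\big)$, with base case $\hG_T=\bG_T=q_T$, reducing each stage to problem $(\cP_t)$ with $y=\hG_{t+1}$, $z=\bG_{t+1}$ and invoking Theorem~\ref{thm_sep}, exactly as the paper does. Your closing concern about the strict positivity $\hG_t>0,\bG_t>0$ (needed so that Lemma~\ref{lem_convex} and hence Theorem~\ref{thm_sep} can be applied with $y,z>0$ at the next stage down, and nontrivial when $q_T=0$ or $\C_t$ is only positive semidefinite) is legitimate, but the paper's proof does not address it either --- it is merely asserted in the theorem statement --- so on that point you are, if anything, more careful than the original.
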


\begin{IEEEproof} We prove this theorem by invoking dynamic programming. At any time $t=0,\cdots, T$, the value function of problem $(\cP_{\LQ}^T)$ is defined as
\begin{align*}
V_t(x_t)&:=\min_{\u_t,\u_{t+1},\cdots,\u_{T-1}} \E_t\left[ \sum_{k=t}^{T} \left(\begin{array}{cc}
\u_k\\
x_k
\end{array}\right)^{\prime}\C_k
\left(\begin{array}{cc}
\u_k\\
x_k
\end{array}\right) \right],\\
{\rm s.t.}~&\{x_k,\u_k\}~\textrm{satisfies (\ref{def_xt}) and (\ref{def_Ut}) for $k=t,\cdots,T-1$}.
\end{align*}
From the Bellmen's principle of optimality, the value function $V_t(x_t)$ satisfies the reversion,
\begin{align}
V_t(x_t)&=\min_{\u_t \in \cU_t}
\left(\begin{array}{cc}
\u_t\\
x_t
\end{array}\right)^{\prime}C_t
\left(\begin{array}{cc}
\u_t\\
x_t
\end{array}\right)\notag\\
&~~~~~~~~~~~~~~~~+\E_t[V_{t+1}(x_{t+1})]. \label{thm_PLQ_Vt}
\end{align}
By using the mathematical induction, we show that the following claim is true,
\begin{align}
V_t(x_t)=x_t^2 (\hG_{t}\1_{\{x_t\geq 0\}} +\bG_t\1_{\{x_t<0\}}) \label{thm_PLQ_claim}
\end{align}
for $t=T,T-1,\cdots,1$, where $\hG_t$ and $\bG_t$ satisfy the recursions, respectively, in (\ref{def_hatGt}) and (\ref{def_barGt}). Clearly, at time $t=T$, the value function is
\begin{align*}
V_T(x_T)= q_T x_T^2.
\end{align*}
As we have defined $\hG_T$ $=$ $\bG_T$ $=$ $q_T$, the claim (\ref{thm_PLQ_claim}) is true. Now, we assume that the claim (\ref{thm_PLQ_claim}) is true at time $t+1$,
\begin{align*}
V_{t+1}(x_{t+1})=x_{t+1}^2 (\hG_{t+1}\1_{\{x_{t+1}\geq 0\}} +\bG_{t+1}\1_{\{x_{t+1}<0\}}).
\end{align*}
From (\ref{thm_PLQ_Vt}) and the definition of $g_t$ in (\ref{def_gt}) at time $t$, we have
\begin{align*}
V_t(x_t)=\min_{\u_t \in \cU_t}~g_t(\u_t,x_t, \hG_{t+1}, \bG_{t+1}),
\end{align*}
which is just the same problem as ($\cP_t$) given in (\ref{def_Pt}) by replacing $y$ and $z$ with $\hG_{t+1}$ and $\bG_{t+1}$, respectively. From Theorem \ref{thm_sep}, we have the result in (\ref{thm_PLQ_ut}) and the claim in (\ref{thm_PLQ_claim}) is true at time $t$, by defining $\hG_t=\hg_t(\hK_t^*, \hG_{t+1}, \bG_{t+1} )$ and $\bG_t=\bg_t(\bK_t^*, \hG_{t+1}, \bG_{t+1} )$, which completes the proof.
\end{IEEEproof}

The above theorem indicates that the system of equations (\ref{def_hatGt}) and (\ref{def_barGt}) play the same role as the Riccati Equation for the classical LQG optimal problem. In the following part, we name the pair of (\ref{def_hatGt}) and (\ref{def_barGt}) as the \textit{Extended Riccati Equations}. Furthermore, when there is no control constraint, these two equations will merge to the one equation in  (\ref{def_GGt}) presented later in Section \ref{sse_noconstraint}.

In Theorem \ref{thm_PLQ}, the key step in identifying $\u_t^*$ is to solve (\ref{def_hatGt}) and (\ref{def_barGt}) for $\hK^*_t$ and $\bK^*_t$, respectively, for each $t=T-1,\cdots, 0$. Generally speaking, once the stochastic process of $\{A_t\}|_{t=0}^{T-1}$ and $\{\B_t\}|_{t=0}^{T-1}$ is specified, we can employ the following procedure to compute $\bK^*_t$ and $\hK^*_t$:

\begin{algorithm}[htb]
\caption{Computation of optimal gains of $\hK_t^*$ and $\bK_t^*$}
\label{alg:hKbK}
\begin{algorithmic}[1]
\State Let $\hG_T$ $\leftarrow $ $q_T$, $\bG_T$ $\leftarrow$ $q_T$. \label{code:step 1}
\For{$k \gets T-1, T-2, \cdots,0 $}
\State Solve problems $(\ref{def_hatGt})$ and $(\ref{def_barGt})$ for $\hK^*_k$ and $\bK^*_k$;
\State Compute
\begin{align*}
\hG_k \leftarrow \hg_k(\hK_k,\hG_{k+1}, \bG_{k+1})\\
\bG_k \leftarrow \bg_k(\bK_k,\hG_{k+1}, \bG_{k+1})
\end{align*}
\EndFor
\end{algorithmic}
\end{algorithm}
Note that, if $\{A_t\}|_{t=0}^{T-1}$ and $\{\B_t\}|_{t=0}^{T-1}$ are statistically independent over time, all the conditional expectations in $\hg_t(\cdots)$ and $\bg_t(\cdots)$ would degenerate to the unconditional expectations, which generates the deterministic pairs of $\{$$\hG_t$, $\bG_t$ $\}$ and $\{$ $\hK^*_t$, $\bK^*_t$ $\}$. However, if $\{A_t\}|_{t=0}^{T-1}$ or $\{\B_t\}|_{t=0}^{T-1}$ are serially correlated over time, all the conditional expectations $\E_t[\cdot]$ depend on the filtration $\cF_t$, or in other words, all these pairs are also random variables. Under such a case, usually, we need numerical methods to discreterize the sample space and solve both problems, (\ref{def_hatGt}) and (\ref{def_barGt}), for each sample path.

\subsection{Solution for problem ($\cP_{\LQ}^T$) with no control constraints}\label{sse_noconstraint}
If there is no control constraint in $(\cP_{\LQ}^T)$, i.e., $\H_t=\0$, Theorem \ref{thm_PLQ} can be simplified to: for all $t=$ $T-1$, $\cdots$, $0$, solving problems $(\hat{\cP}_t)$ and $(\bar{\cP}_t)$ yields $\hG_t$ $=$ $\bG_t$ and $\hK_t^*$$=$$-\bK^*_t$, respectively. Let $\K^*_t$$:= $$\hK^*_t$  and $G_t=\hG_t$, for $t=$$T-1$, $\cdots$, $0$. Thus, we have the following explicit control policy.
\begin{proposition}\label{prop_PT_nocont}
When there is no control constraint in  problem ($\cP_{\LQ}^T$),  the optimal control becomes
\begin{align*}
\u_t^*=-\big(\R_t+\E_t[G_{t+1}\B_t^{\prime}\B_t ]\big)^{-1}( \E_t[G_{t+1}A_t\B_t^{\prime}] +\S_t)x_t,
\end{align*}
for $t=T-1, \cdots,0$, where $G_{t}$ is defined by
\begin{align}
G_t&=q_t +\E_t[G_{t+1}A_t^2]-(\S_t +\E_t[G_{t+1}A_t\B_t^{\prime}])^{\prime}\notag\\
   &~\times (\R_t +\E_t[G_{t+1}\B_t^{\prime}\B_t ])^{-1}(\S_t +\E_t[G_{t+1}A_t\B_t^{\prime}]),\label{def_GGt}
\end{align}
with $G_T=q_T$.
\end{proposition}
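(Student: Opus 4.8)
The plan is to read off the claim as a specialization of Theorem~\ref{thm_PLQ}. When $\H_t=\0$ (with $\d_t=\0$) the constraint $\H_t\K\le\d_t$ is vacuous, so $\cK_t=\bR^n$ and the problems $(\hat{\cP}_t)$, $(\bar{\cP}_t)$ become \emph{unconstrained} minimizations. The entire simplification hinges on a single structural fact: the indicator pair in (\ref{def_hgt})--(\ref{def_bgt}) partitions the sample space, so that once its two weights coincide, $y=z$, the factor $y\1_{\{\cdot\ge0\}}+z\1_{\{\cdot<0\}}$ collapses to the constant $y$ and all state-sign switching disappears. I would therefore carry out a backward induction on $t$ under the hypothesis $\hG_{t+1}=\bG_{t+1}=:G_{t+1}$, which holds at the terminal time since $\hG_T=\bG_T=q_T$ by definition.

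For the inductive step I set $y=z=G_{t+1}$ in (\ref{def_hgt}) and (\ref{def_bgt}). Expanding $\binom{\K}{1}^{\prime}\C_t\binom{\K}{1}$ and $(A_t+\B_t\K)^2$ and using the indicator collapse turns $\hg_t$ into the genuine quadratic
\begin{align*}
\hg_t(\K,G_{t+1},G_{t+1})=\K^{\prime}\big(\R_t+\E_t[G_{t+1}\B_t^{\prime}\B_t]\big)\K+2\big(\S_t+\E_t[G_{t+1}A_t\B_t^{\prime}]\big)^{\prime}\K+q_t+\E_t[G_{t+1}A_t^2],
\end{align*}
while $\bg_t$ returns the same expression with the linear term negated, because passing from $\K$ to $-\K$ in $\binom{-\K}{1}^{\prime}\C_t\binom{-\K}{1}$ and in $(A_t-\B_t\K)^2$ flips only the cross terms. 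By Lemma~\ref{lem_convex} these are convex, and by Assumption~\ref{asmp_psd} (which gives $\E_t[\B_t^{\prime}\B_t]\succ0$ and $\R_t\succeq0$) together with $G_{t+1}>0$, the Hessian $\R_t+\E_t[G_{t+1}\B_t^{\prime}\B_t]$ is positive definite, so the unconstrained stationarity condition has the unique solutions
\begin{align*}
\hK_t^*=-\big(\R_t+\E_t[G_{t+1}\B_t^{\prime}\B_t]\big)^{-1}\big(\S_t+\E_t[G_{t+1}A_t\B_t^{\prime}]\big),\qquad\bK_t^*=-\hK_t^*.
\end{align*}

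Substituting $\hK_t^*$ back and completing the square shows that $\hG_t$ coincides with the right-hand side of the Riccati recursion (\ref{def_GGt}); the identical computation for $\bg_t$ returns the \emph{same} scalar, so $\hG_t=\bG_t=:G_t$ and the induction closes, simultaneously establishing $\hK_t^*=-\bK_t^*$. It remains to push these through the feedback law (\ref{thm_PLQ_ut}): for $x_t<0$ the branch $-\bK_t^*x_t$ equals $\hK_t^*x_t$, so the two branches agree and the policy reduces to the single linear law $\u_t^*=\hK_t^*x_t$, which is precisely the stated control with $\K_t^*:=\hK_t^*$ and $G_t$ obeying (\ref{def_GGt}). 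I do not anticipate a genuine obstacle, as the result is essentially a corollary of Theorem~\ref{thm_PLQ}; the only points needing care are justifying the indicator collapse rigorously (it is valid exactly because the induction furnishes $y=z$) and the invertibility of $\R_t+\E_t[G_{t+1}\B_t^{\prime}\B_t]$, which Assumption~\ref{asmp_psd} secures together with the positivity $G_{t+1}>0$ supplied by Theorem~\ref{thm_PLQ} (the terminal step $t=T-1$ being covered whenever $q_T>0$).
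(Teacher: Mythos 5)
Your proposal is correct and follows essentially the same route the paper intends: the paper only sketches this proposition (noting that with $\H_t=\0$ the unconstrained problems $(\hat{\cP}_t)$, $(\bar{\cP}_t)$ yield $\hG_t=\bG_t$ and $\hK_t^*=-\bK_t^*$, so the two coupled equations merge into (\ref{def_GGt})), and your backward induction with the indicator collapse at $y=z=G_{t+1}$ is exactly the fleshed-out version of that argument. Your added care about invertibility of $\R_t+\E_t[G_{t+1}\B_t^{\prime}\B_t]$ at the terminal step (requiring $q_T>0$ or $\R_{T-1}\succ0$) is a point the paper glosses over, and is a welcome refinement rather than a deviation.
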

Proposition \ref{prop_PT_nocont} is a generalization of Corollary 5 in \cite{GaoLiCuiWang:2015}, which solves the dynamic mean-variance portfolio control problem with correlated returns.

\section{Optimal Solution to Problem $(\cP_{\LQ}^{\infty})$}\label{se_sol_Pinf}
In this section, we develop the optimal solution for $(\cP_{\LQ}^{\infty})$. The main idea is to study the asymptotic behavior of the solution from the correspondent finite-horizon problem by extending the time horizon  to infinity. More specifically, we consider the following problem with finite  horizon,
\begin{align*}
(\cA^T )~&~\min_{\{\u_t\}|_{t=0}^{T-1}}~\E\left[ ~~\sum_{t=0}^{T-1}
\left(\begin{array}{cc}
\u_t\\
x_t
\end{array}\right)^{\prime}\C
\left(\begin{array}{cc}
\u_t\\
x_t
\end{array}\right) \right]\\
{\rm s.t.}~&~\{x_t,\u_t\} ~~\textrm{satisfies (\ref{def_xt_inf}) and (\ref{def_Ut_inf}) for $t=0,\cdots,T-1$},
\end{align*}
where $x_0$ is given. Obviously, $(\cA^{T})$ becomes $(\cP_{\LQ}^{\infty})$ when $T$ goes to infinity.

Problem $(\cA^T)$ can be regarded as a special case of problem $(\cP_{\LQ}^T)$ studied in Section \ref{se_sol_P}, when we assume $\{A_t,\B_t\}$ to be i.i.d and $q_T=0$. Theorem \ref{thm_PLQ} can be thus applied to solve $(\cA^T)$. In order to study $(\cA^T)$ for different time horizon $T$, we use the notations $\{$$\hG_0^T$, $\hG_1^T$,$\cdots$, $\hG_T^T$$\}$ and $\{$$\bG_0^T$, $\bG_1^T$,$\cdots$, $\bG_T^T$$\}$ to denote the outputs of recursions (\ref{def_hatGt}) and (\ref{def_barGt}) with the boundary condition $\hG_T^T$ $=$ $\bG_T^T$ $=0$, respectively. From Theorem \ref{thm_PLQ}, the optimal value of problem $(\cA^T)$ can be expressed as
\begin{align}
v(\cA^T)=x_0^2\Big( \hG_0^T \1_{\{x_0\geq 0\}} + \bG_0^T \1_{\{x_0< 0\}} \Big). \label{def_v(cP)}
\end{align}
 As $x_0$ is known, the optimal value $v(\cA^T)$ is either $x_0^2 \hG^T_0$ or $x_0^2 \bG_0^T$. The output sequences, $\{\hG_t^T\}|_{t=0}^T$ and $\{\bG_t^T\}|_{t=0}^T$, have the following property.
\begin{lemma}\label{lem_G_homo}
Consider problems $(\cA^T)$ and $(\cA^{T+k})$ for some $k\in \{1,2,\cdots\}$. If $\{$$\hG^T_0$,$\cdots$,$\hG^T_{T}$$\}$, $\{$$\bG^T_0$,$\cdots$,$\bG^T_{T}$$\}$  and $\{$$\hG^{T+k}_0$,$\cdots$,$\hG^{T+k}_{T+k}$$\}$, $\{$$\bG^{T+k}_0$,$\cdots$,$\bG^{T+k}_{T+k}$$\}$ are the outputs of recursions (\ref{def_hatGt}) and (\ref{def_barGt}) of these two problems, respectively, then $\hG^T_t$$=$$\hG^{T+k}_{t+k}$, $\bG^T_t$$=$$\bG^{T+k}_{t+k}$
for all $t=0,1,\cdots, T-1$.
\end{lemma}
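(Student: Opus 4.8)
The plan is to exploit the fact that, once the system parameters are i.i.d.\ and the constraint data are frozen, the backward recursion (\ref{def_hatGt})--(\ref{def_barGt}) becomes time-homogeneous, so that the values it generates depend only on the number of steps separating them from the terminal time. First I would record the key structural observation underlying the whole argument: because $\{A_t,\B_t\}$ are i.i.d.\ and $\H_t\equiv\H$, $\d_t\equiv\d$, $\C_t\equiv\C$ are constant in $t$, the conditional expectations $\E_t[\cdot]$ collapse to the unconditional $\E[\cdot]$, and the functions $\hg_t$, $\bg_t$ of (\ref{def_hgt})--(\ref{def_bgt}), together with the feasible sets $\cK_t=\{\K\mid\H\K\le\d\}$, all lose their dependence on $t$. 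Consequently the one-step backward operator $\Phi(y,z):=\big(\min_{\K\in\cK}\hg(\K,y,z),\,\min_{\K\in\cK}\bg(\K,y,z)\big)$ is one and the same map at every stage, and both problems obey the identical recursion $(\hG_t,\bG_t)=\Phi(\hG_{t+1},\bG_{t+1})$.

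Next I would run a backward induction on the ``time-to-go'' $\ell=0,1,\dots,T$, establishing the pair of equalities $\hG^T_{T-\ell}=\hG^{T+k}_{T+k-\ell}$ and $\bG^T_{T-\ell}=\bG^{T+k}_{T+k-\ell}$. The base case $\ell=0$ is exactly the shared terminal condition, since both problems set their last entries to zero (here $q_T=0$), giving $\hG^T_T=\bG^T_T=0=\hG^{T+k}_{T+k}=\bG^{T+k}_{T+k}$. For the inductive step, assuming the equalities at level $\ell$, I would apply the common operator $\Phi$ to both sides, using $(\hG^T_{T-\ell-1},\bG^T_{T-\ell-1})=\Phi(\hG^T_{T-\ell},\bG^T_{T-\ell})$ and $(\hG^{T+k}_{T+k-\ell-1},\bG^{T+k}_{T+k-\ell-1})=\Phi(\hG^{T+k}_{T+k-\ell},\bG^{T+k}_{T+k-\ell})$; since $\Phi$ is the same map evaluated at equal arguments, the outputs coincide, which is the assertion at level $\ell+1$. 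Finally, the substitution $t=T-\ell$ (so that $\ell$ running from $0$ to $T$ sweeps $t$ from $T$ down to $0$) converts these into $\hG^T_t=\hG^{T+k}_{t+k}$ and $\bG^T_t=\bG^{T+k}_{t+k}$ for every $t=0,1,\dots,T-1$ (indeed also for $t=T$), as claimed.

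I do not expect a genuine analytical obstacle, since the content is structural rather than computational. The one point demanding care is the justification of time-homogeneity itself, namely that $\hg_t$, $\bg_t$, and $\cK_t$ truly shed their $t$-dependence under the i.i.d.\ and stationarity hypotheses of Section \ref{sse_infinite} (including the reduction $\E_t[\cdot]=\E[\cdot]$), combined with the correct index bookkeeping so that the time-to-go of $(\cA^T)$ at stage $t$ matches that of $(\cA^{T+k})$ at stage $t+k$, both equal to $T-t$. Well-definedness of each $\min$ appearing in $\Phi$ is inherited from Theorem \ref{thm_PLQ} and the standing assumptions, so no separate existence argument is required.
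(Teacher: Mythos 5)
Your proposal is correct and follows essentially the same route as the paper: the paper's (very brief) proof likewise applies Theorem \ref{thm_PLQ} to both $(\cA^T)$ and $(\cA^{T+k})$, notes the identical boundary conditions $\hG^{T}_{T}=\hG^{T+k}_{T+k}=\bG^{T}_{T}=\bG^{T+k}_{T+k}=0$, and relies on the time-homogeneity of the recursions (\ref{def_hatGt})--(\ref{def_barGt}) under the i.i.d.\ assumption. Your write-up merely makes explicit the backward induction on time-to-go and the stationarity of the one-step operator that the paper leaves implicit.
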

The proof of Lemma \ref{lem_G_homo} is straightforward by applying Theorem \ref{thm_PLQ} respectively to problems  $(\cA^T)$ and $(\cA^{T+k})$, and noticing the same boundary conditions $\hG^{T}_{T}$$=$$\hG^{T+k}_{T+k}$$=$$0$ and
$\bG^{T}_{T}$$=$$\bG^{T+k}_{T+k}$$=$$0$ for the recursions (\ref{def_hatGt}) and (\ref{def_barGt}).  Lemma \ref{lem_G_homo} basically shows that the solution sequences $\{\hG^T_t\}|_{t=0}^T$ and $\{\bG^T_t\}|_{t=0}^T$  are time-homogenous, i.e., the values of these two sequences do not depend on the exact time $t$ but depend on the total recursion numbers in (\ref{def_hatGt}) and (\ref{def_barGt}).

Let $\mathcal{K}$$:=$$\{$$\K$$\in$$\bR^n$$~|~$$\H\K$$\leq$$\d$$\}$. We present now the main result for problem $(\cP_{\LQ}^{\infty})$.
\begin{theorem}\label{thm_PLQinf}
For problem $(\cA^T)$, if there exists some $M>0$, such that $\hG_0^T<M$ and $\bG_0^T<M$   for any $T$,  then the following system of equations admits a pair of solution, $\hG^* >0$ and $\bG^*>0$,
\begin{align}
\hG^* &:=\min_{\K \in \cK}\hg(\K, \hG^*,\bG^*),\label{def_equ_hG}\\
\bG^* &:=\min_{\K \in \cK}\bg(\K, \hG^*,\bG^*),\label{def_equ_bG}
\end{align}
where $\hg(\K, y, z)$ $:$ $\bR^m$ $\times$ $\bR_+$ $\times$ $\bR_+$ $\rightarrow$ $\bR_+$ and $\bg(\K, y, z)$ $:$ $\bR^m$ $\times$ $\bR_+$ $\times$ $\bR_+$ $\rightarrow$ $\bR_+$  are defined as
\begin{align}
&\hg(\K, y, z )=\E \Big[ \left(\begin{array}{cc}
		\K\\
		1
	\end{array}\right)^{\prime}\C
	\left(\begin{array}{cc}
		\K\\
		1
	\end{array}\right)+(A+\B\K)^2\notag\\
	&~~~\times\big(y \1_{\{A+\B\K\geq 0\}} + z\1_{\{A +\B\K <0\}}\big)\Big],\label{def_hatg}\\
&\bg(\K,y, z)=\E \Big[ \left(\begin{array}{cc}
		-\K\\
		1
	\end{array}\right)^{\prime}\C
	\left(\begin{array}{cc}
		-\K\\
		1
	\end{array}\right)+(A-\B\K)^2\notag\\
&~~~\times\big(y\1_{\{A-\B\K \leq 0\}} +z\1_{\{A_t-\B_t\K >0\}}\big)\Big],\label{def_barg}
\end{align}
and the policy
\begin{align}
\u_t^*=\hK^*x_t \1_{\{x_t\geq 0\}} +\bK^* x_t \1_{\{x_t<0\}},\label{thm_PLQinf_ut}
\end{align}
$t=0,\cdots, \infty$, solves problem $(\cP_{\LQ}^{\infty})$, where
\begin{align}
\hK^*&=\arg \min_{\K \in \cK} \hg(\K, \hG^*, \bG^* ), \label{thm_PLQinf_hK}\\
\bK^*&=\arg \min_{\K \in \cK}  \bg(\K, \hG^*,\bG^* ). \label{thm_PLQinf_bK}
\end{align}
Furthermore, under the optimal control $u_t^*$ the closed-loop system,
$x_{t+1}$ $=$ $x_t\big( A+\B(\hK^* \1_{\{x_t\geq 0\}}$ $+$ $\bK^*\1_{\{x_t<0\}})\big)$,
is $L^2$-asymptotically stable, i.e., $\lim_{t\rightarrow \infty}$$\E[(x_t^*)^2]$$=$$0$.
\end{theorem}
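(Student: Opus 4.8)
The plan is to obtain the infinite-horizon data as the limit of the finite-horizon quantities produced by Theorem \ref{thm_PLQ}, and then to establish optimality and stability together through a single telescoping identity driven by the fixed-point equations.

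\textbf{Step 1 (existence of the fixed point).} First I would use the time-homogeneity in Lemma \ref{lem_G_homo} to read the pair $(\hG_0^T,\bG_0^T)$ as the $T$-fold iterate, started at $(0,0)$, of the operator $\Phi(y,z):=\big(\min_{\K\in\cK}\hg(\K,y,z),\ \min_{\K\in\cK}\bg(\K,y,z)\big)$, since raising the horizon by one simply prepends one more application of the maps $\hg,\bg$ in (\ref{def_hatg})--(\ref{def_barg}) to the boundary value $0$ (indeed $\hG_1^{T+1}=\hG_0^T$). Because $y$ and $z$ multiply the nonnegative quantities $(A\pm\B\K)^2\1_{\{\cdot\}}$, each of $\hg,\bg$ is nondecreasing in $(y,z)$, so $\Phi$ is order-preserving; with $\Phi(0,0)\ge(0,0)$ this gives by induction that $\{(\hG_0^T,\bG_0^T)\}_T$ is componentwise nondecreasing. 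The standing hypothesis supplies the uniform bound $M$, so both sequences are monotone and bounded, hence convergent to limits $\hG^*,\bG^*$. Passing to the limit in $(\hG_0^{T+1},\bG_0^{T+1})=\Phi(\hG_0^T,\bG_0^T)$ yields (\ref{def_equ_hG})--(\ref{def_equ_bG}); positivity follows since Assumption \ref{asmp_C} makes $(\K',1)\C(\K',1)'$ a coercive strictly positive quadratic in $\K$, so already $\hG_0^1=\min_\K\hg(\K,0,0)>0$ and the sequence only grows. The same coercivity guarantees the minimizers (\ref{thm_PLQinf_hK})--(\ref{thm_PLQinf_bK}) are attained.

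\textbf{Step 2 (verification identity).} Writing $V(x):=x^2(\hG^*\1_{\{x\ge0\}}+\bG^*\1_{\{x<0\}})$ and $J_\infty(\u)$ for the objective of $(\cP_{\LQ}^{\infty})$, I would substitute the stationary policy (\ref{thm_PLQinf_ut}) into the closed-loop dynamics and check the one-step relation $\E_t[\,(\u_t^{*\prime},x_t)\C(\u_t^{*\prime},x_t)'+V(x_{t+1})\,]=V(x_t)$. This is precisely the content of the fixed-point equations: for $x_t\ge0$ both sides collapse to $x_t^2\,\hg(\hK^*,\hG^*,\bG^*)=x_t^2\hG^*$, and symmetrically for $x_t<0$ via $\bg$, reusing the case analysis of Theorem \ref{thm_sep}. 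Taking expectations and telescoping over $t=0,\dots,N-1$ gives $V(x_0)=\sum_{t=0}^{N-1}\E\big[(\u_t^{*\prime},x_t)\C(\u_t^{*\prime},x_t)'\big]+\E[V(x_N)]$. Since all summands are nonnegative and the partial sums are bounded by $V(x_0)$, the series converges and in particular the stage cost tends to $0$.

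\textbf{Step 3 (stability, attainment, optimality).} Assumption \ref{asmp_C} gives $(\u',x)\C(\u',x)'\ge\lambda_{\min}(\C)\,x^2$ with $\lambda_{\min}(\C)>0$, so vanishing of the stage cost forces $\E[(x_t^*)^2]\to0$, which is the claimed $L^2$-asymptotic stability; hence $\E[V(x_N)]\le\max(\hG^*,\bG^*)\,\E[(x_N^*)^2]\to0$ and the telescoping identity reduces to $J_\infty(\u^*)=V(x_0)$, so the policy attains $V(x_0)$. For the matching lower bound I would use nonnegativity of the stage costs: for any admissible $\u$, truncation to the first $T$ periods is feasible for $(\cA^T)$, whence $J_\infty(\u)\ge\E\big[\sum_{t=0}^{T-1}(\cdot)\big]\ge v(\cA^T)=x_0^2(\hG_0^T\1_{\{x_0\ge0\}}+\bG_0^T\1_{\{x_0<0\}})$; letting $T\to\infty$ yields $J_\infty(\u)\ge V(x_0)$, which with attainment proves optimality of (\ref{thm_PLQinf_ut}).

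\textbf{Main obstacle.} The delicate part is Step 1: passing to the limit in $\Phi$ requires continuity of the value maps $(y,z)\mapsto\min_{\K\in\cK}\hg(\K,y,z)$ and $\min_{\K\in\cK}\bg(\K,y,z)$, which I expect to justify from the joint continuity of $\hg,\bg$ together with the coercivity of the effective minimization (a Berge-type maximum-theorem argument), and some care is also needed to interchange limit and expectation as $N\to\infty$ in Steps 2--3. Everything else reduces to the monotone convergence of bounded increasing sequences and the verification calculation built directly on the fixed-point equations.
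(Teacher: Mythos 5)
Your proposal is correct and follows essentially the same route as the paper: monotone, uniformly bounded finite-horizon values whose limit satisfies the fixed-point equations, followed by a verification/telescoping identity built on (\ref{def_equ_hG})--(\ref{def_equ_bG}) that yields both the cost evaluation and $L^2$-stability. The differences are at the lemma level and in rigor rather than in route. For monotonicity, you argue that the operator $\Phi$ is order-preserving and iterate from $(0,0)$, whereas the paper derives the same fact control-theoretically, via the inequality $v(\cA^{T+1})\geq v(\cA^{T})$ obtained by restricting the $(T+1)$-horizon optimal policy to the first $T$ periods; both arguments are valid and interchangeable. Your Steps 2--3 coincide with the paper's computations (\ref{thm_PLQinf_eq1})--(\ref{thm_PLQinf_eq5}), including the use of $\C\succ 0$ to convert vanishing stage cost into $\E[(x_t^*)^2]\rightarrow 0$. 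Where you go beyond the paper: you correctly flag that passing the limit through $\min_{\K\in\cK}$ requires a continuity/coercivity (Berge-type) argument --- the paper simply asserts that (\ref{thm_PLQinf_hGT})--(\ref{thm_PLQinf_bGT}) ``will converge'' to (\ref{def_equ_hG})--(\ref{def_equ_bG}) --- and you supply the explicit optimality sandwich (attainment $J_\infty(\u^*)=V(x_0)$ via $\E[V(x_N)]\rightarrow 0$, plus the lower bound $J_\infty(\u)\geq v(\cA^T)\rightarrow V(x_0)$ for every admissible $\u$), where the paper only states that the stationary policy ``becomes optimal.'' So your write-up, if the Berge-type step is carried out, is a more complete version of the paper's own argument.
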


\begin{IEEEproof}
We first show that the optimal value of problem $(\cA^T)$ is nondecreasing when $T$ increases. Suppose that the optimal control policy $ \{\tilde{\u}_t\}|_{t=0}^{T}$ solves problem $(\cA^{T+1})$. The following is evident,
\begin{align}
&v(\cA^{T+1})=\E \left[ \sum_{t=0}^{T}
\left(\begin{array}{cc}
\tilde{\u}_t\\
x_t
\end{array}\right)^{\prime}\C
\left(\begin{array}{cc}
\tilde{\u}_t\\
x_t
\end{array}\right) \right] \notag\\
&~~\geq \E \left[ \sum_{t=0}^{T-1}
\left(\begin{array}{cc}
\tilde{\u}_t\\
x_t
\end{array}\right)^{\prime}\C
\left(\begin{array}{cc}
\tilde{\u}_t\\
x_t
\end{array}\right) \right]\geq v(\cA^{T}). \label{thm_PLQinf_ineq}
\end{align}
Applying both (\ref{thm_PLQinf_ineq}) and (\ref{def_v(cP)}) gives rise to $\hG^{T+1}_0\geq \hG^T_0$ and $ \bG^{T+1}_0\geq \bG^T_0$. Thus, $\hG^{T}_0$ and $\bG^{T}_0$ are nondecreasing sequences with respect to $T$. If there exists $M$ such that $\hG^{T}_0<M$  and $\bG^{T}_0<M$ for any $T$, both limits of $\hG^{T}_0$ and $\bG^{T}_0$ exist. We denote $\hG^{\infty}_0$ $:=$ $\lim_{T\rightarrow \infty}$ $\hG^T_0$ and $\bG^{\infty}_0$ $:=$ $\lim_{T\rightarrow \infty}$ $\bG^T_0$. Now, we derive the relationships between $\hG^T_0$ and $\hG^{T+1}_0$, and between $\bG_0^T$  and $\bG_0^{T+1}$, respectively. The value function $V_1(x_1)$ of problem $(\cA_{T+1})$ is given as follows from (\ref{thm_PLQ_claim}),
\begin{align}
V_1(x_1)&=x_1^2 \big( \hG_1^{T+1}\1_{\{x_1\geq 0\}} +\bG^{T+1}_1\1_{\{x_1< 0\}} \big)\notag\\
&=x_1^2 \big( \hG_0^{T}\1_{\{x_1\geq 0\}} +\bG^{T}_0\1_{\{x_1< 0\}} ) ,\label{thm_PLQinf_V1}
\end{align}
where the last equality is from Lemma \ref{lem_G_homo}. Then at time $t=0$, the recursion (\ref{thm_PLQ_Vt}) implies
\begin{align}
V_0(x_0)=\min_{\u_0 \in \cU(x_0) } \E\left[ \left(\begin{array}{cc}
\u_t\\
x_t
\end{array}\right)^{\prime}\C
\left(\begin{array}{cc}
\u_t\\
x_t
\end{array}\right)+ V_1(x_1)\right].\label{thm_PLQinf_V0}
\end{align}
Substituting (\ref{thm_PLQinf_V1}) to (\ref{thm_PLQinf_V0}) rewrites the objective function of (\ref{thm_PLQinf_V0}) as $g_t(\u_0,x_0,\hG^{T}_{0},\bG^{T}_{0} )$ defined in (\ref{def_gt}). Thus, applying Theorem \ref{thm_sep} yields the optimal value of problem (\ref{thm_PLQinf_V0}) as
\begin{align*}
V_0(x_0)=x_0^2 ( \hG_0^{T+1}\1_{\{x_0\geq 0\} }+\bG_0^{T+1}\1_{\{x_0< 0\}} ),
\end{align*}
where  $\hG^{T+1}_0$ and $\bG^{T+1}_0$ are given by
\begin{align}
\hG^{T+1}_0 &=\min_{\K\in\cK } \hg(\K, \hG_0^T, \bG_0^T), \label{thm_PLQinf_hGT}\\
\bG^{T+1}_0 &=\min_{\K\in\cK } \bg(\K, \hG_0^T, \bG_0^T). \label{thm_PLQinf_bGT}
\end{align}
When $T$ goes to $\infty$, if both the limits of $\hG^T_0$ and $\bG^T_0$ exist, the equations (\ref{thm_PLQinf_hGT}) and (\ref{thm_PLQinf_bGT}) will converge to the equations (\ref{def_equ_hG}) and (\ref{def_equ_bG}), respectively.  Under such a case, the stationary optimal policy (\ref{thm_PLQinf_ut}) becomes optimal, where the correspondent $\hK^*$ and $\bK^*$ are the minimizers of (\ref{def_equ_hG}) and (\ref{def_equ_bG}), respectively.

We show next that under the optimal control $\u_t^*$, the closed-loop system (\ref{thm_PLQinf}) is asymptotically stable. At any time $t$, we first assume $x_t^*\geq 0$. Then, $x_{t+1}^*$ $=$ $(A+\B\hK^*)x_t^*$ and the following equality holds,
\begin{align}
&\E\big[(x_{t+1}^*)^2(\hG^*\1_{\{x_{t+1}^*\geq 0\}} +\bG^* \1_{\{x_{t+1}^*<0\}})\big]-(x_t^*)^2 \hG^* \notag\\
&=\E\big[(x_t^*)^2(A+\B\hK^*)^2\big(\hG^*\1_{\{A+\B\hK^* \geq0\}}\notag\\
&+\bG^* \1_{\{A+\B\hK^* <0\}}\big)\big]-(x_t^*)^2\hG^*\notag\\
&=-(x_t^*)^2\left(
             \begin{array}{c}
               \hK^* \\
               1 \\
             \end{array}
           \right)^{\prime}\C \left(
                                \begin{array}{c}
                                  \hK^* \\
                                  1 \\
                                \end{array}
                              \right),\label{thm_PLQinf_eq1}
\end{align}
where the last equality is from (\ref{def_equ_hG}). When $x_t<0$, we can derive a similar equality,
\begin{align}
&\E[(x_{t+1}^*)^2(\hG^*\1_{\{x_{t+1}^*\geq 0\}} +\bG^* \1_{ \{x_{t+1}^*<0\}})]-(x_t^*)^2 \bG^* \notag\\
&=-(x_t^*)^2\left(
             \begin{array}{c}
               -\bK^* \\
               1 \\
             \end{array}
           \right)^{\prime}\C \left(
                                \begin{array}{c}
                                  -\bK^* \\
                                  1 \\
                                \end{array}
                              \right).\label{thm_PLQinf_eq2}
\end{align}
Combining (\ref{thm_PLQinf_eq1}) and (\ref{thm_PLQinf_eq2}) gives rise to
\begin{align}
&\E\big[(x_{t+1}^*)^2(\hG^*\1_{\{x_{t+1}^*\geq 0\}} +\bG^* \1_{\{x_{t+1}^*<0\}})\big]\notag\\
&~~-(x_t^*)^2 (\hG^*\1_{\{x_t^*\geq 0\}}+ \bG^*\1_{\{x_t^*<0\}} ) \notag\\
&=-(x_t^*)^2\big( \hat{J}\1_{\{x_t^*\geq 0\}}+ \bar{J}\1_{\{x_t^*<0\}} \big),\label{thm_PLQinf_eq3}
\end{align}
where
\begin{align*}
\hat{J}=\left(
             \begin{array}{c}
               \hK^* \\
               1 \\
             \end{array}
           \right)^{\prime}\C\left(
                                \begin{array}{c}
                                  \hK^* \\
                                  1 \\
                                \end{array}
                              \right),~~\bar{J}= \left(
             \begin{array}{c}
               -\bK^* \\
               1 \\
             \end{array}
           \right)^{\prime}\C\left(
                                \begin{array}{c}
                                  -\bK^* \\
                                  1 \\
                                \end{array}
                              \right).
\end{align*}
%
%
Then, applying (\ref{thm_PLQinf_eq3}) successively $t$ times and taking the expectation yields,
\begin{align}
&\E[(x_{t+1}^*)^2(\hG^*\1_{\{x_{t+1}^*\geq 0\}} +\bG^* \1_{\{x_{t+1}^*<0\}})]\notag\\
&=x_0^2(\hG^*\1_{\{x_0\geq 0\}}+\bG^*\1_{\{x_0<0\}} )\notag\\
&~~-\sum_{k=0}^{t}\E\left[(x_k^*)^2\big(\hat{J}\1_{\{x_k^*\geq 0\}}+\bar{J} \1_{\{x_k^*<0\}}\big)\right].\label{thm_PLQinf_eq4}
\end{align}
Since $\hG^*\geq 0$ and $\bG^* \geq 0$, the left hand side of (\ref{thm_PLQinf_eq4}) is nonnegative for all $t$. Thus,
\begin{align}
\lim_{t\rightarrow \infty}\E\left[(x_t^*)^2\big(\hat{J}\1_{\{x_t^*\geq 0\}}+\bar{J} \1_{\{x_t^*<0\}}\big)\right]=0. \label{thm_PLQinf_eq5}
\end{align}
From Assumption \ref{asmp_C},  we know $\hat{J}>0$ and $\bar{J}>0$. Thus, (\ref{thm_PLQinf_eq5}) implies $\lim_{t\rightarrow \infty} \E[(x^*_t)^2]=0$.

\end{IEEEproof}

Although Theorem \ref{thm_PLQinf} characterizes the solution of $(\cP_{\LQ}^{\infty})$, the existence condition for the solutions to (\ref{def_equ_hG}) and (\ref{def_equ_bG}) is not easy to check. Motivated by Proposition 4.17 in \cite{Bertsekas:2005}, we can adopt the following algorithm to check whether the solutions exist for (\ref{def_equ_hG}) and (\ref{def_equ_bG}) in real applications.

\begin{algorithm}[htb]
\caption{Iterative solution method for (\ref{def_equ_hG}) and (\ref{def_equ_bG}) }
\label{alg:hGbG}
\begin{algorithmic}[1]
\Require
The information of the distribution of $A$ and $\B$, the threshold parameter $\epsilon>0$ and the maximum number of the iteration, $I_{\max}$.
\Ensure
\State Let $i\leftarrow 0$, $\hG_i\leftarrow 0$ and $\bG_i \leftarrow0$. \label{code:step 1}
\State  Calculate
  \begin{align}
  &\hG_{i+1} \leftarrow \min_{\K\in  \cK} \hg(\K,\hG_{i},\bG_{i}), \label{alg:hGbG_hG}\\
  &\bG_{i+1} \leftarrow \min_{\K \in \cK} \bg(\K, \hG_{i},\bG_{i}).\label{alg:hGbG_bG}
  \end{align}
\label{code:step 2}
\State If $|\hG_{i+1}-\hG_{i}|$ $\leq$ $\epsilon$ and $|\bG_{i+1}-\bG_{i}|$ $\leq$ $\epsilon$, go to Step \ref{code:return_suc};
\State If $i>I_{\max}$, go to Step \ref{code:return_fail};
\State Let $i\leftarrow i+1$ and go to Step \ref{code:step 2};
\label{code:fram:classify}\\
\Return $\hG^*\leftarrow \hG_{i+1}$, $\bG^*\leftarrow \bG_{i+1}$, and $\hK^*$ and $\bK^*$ as calculated in (\ref{thm_PLQinf_hK}) and (\ref{thm_PLQinf_bK}), respectively; \label{code:return_suc}
\State Equations (\ref{def_equ_hG}) and (\ref{def_equ_bG}) do not have solution. \label{code:return_fail}
\end{algorithmic}
\end{algorithm}

%

For problem ($\cP_{\LQ}^{\infty}$), due to the control constraint, we need to use an iterative process in Algorithm \ref{alg:hGbG} to check whether (\ref{def_equ_hG}) and (\ref{def_equ_bG}) admit a solution. However, it is possible to derive some sufficient conditions to guarantee the existence of the solution. Before we present our result, it is worth to mention some classical results. For the classical LQG problem\footnote{In the classical LQG model, the noises are additive in the state equation and there is no control constraints.}, it is well known that both the controllability and observability of the system guarantee the existence of the solution to the correspondent algebraic Riccati equation. However, when there is uncertainty in the system matrices $A$ and $B$\footnote{This kind of uncertain systems is also known as the system with multiplicative noise.}, it is well known that the controllability and observability no longer can ensure the existence of the solution for the correspondent algebraic Riccati equation (e.g., see \cite{AthanKuGershwin:1977} \cite{Yaz:1990}). Athans et al. \cite{AthanKuGershwin:1977} provide the condition called \textit{the uncertainty threshold principle}, which guarantees the existence of the solution of the algebraic Riccati equation resulted from the simple uncertain system with a scalar state and a scalar control. Roughly speaking, if a threshold computed by the systems parameters is less or equal than $1$, the correspondent algebraic Riccati equation admits a solution. The result given in \cite{Yaz:1990} further generalizes such a result to systems with a vector state and a vector control. However, because of the control constraint in $(\cP_{\LQ}^{\infty})$, the result of the uncertainty threshold principle does not hold for our problem. Let us use the following counter example to explain.
\begin{example}\label{exam_counter}
We consider an example of problem ($\cP_{\LQ}^{\infty}$) with $n$$=$$1$, $q$$=$$1$, $\R$$=$$1$ and $\S$ $=$ $0.1$. The uncertain system matrices are assumed to take the following 5 scenario with identical probability $=$ $0.2$, $(A,\B)$ $\in$ $ \{$ $(-0.8,-0.7)$, $(-0.4, -0.6)$, $(0.2,0.4)$, $(0.6,0.8 )$, $(0.9,1)\}$.
It is not hard to compute the threshold given in \cite{AthanKuGershwin:1977} as,
\begin{align*}
\textrm{Threshold}=\E[A^2]-\frac{(\E[A]\E[B])^2 }{\E[B^2]}=0.4014<1.
\end{align*}
According to the results in Athans et al. \cite{AthanKuGershwin:1977}, the correspondent algebraic Riccati equation has a unique solution. However, if we set control constraint as $3|x_t|$  $\leq $ $\u_t$ $\leq$ $4 |x_t|$ for all $t=0,\cdots, T-1$, applying Algorithm \ref{alg:hGbG} concludes that $\hat{G}^*$ and $\bar{G}^*$ go to infinity as the number of iteration increases (see Sub-figure (b) in Figure \ref{fig_hGbG}). Sub-figures (a) in Figure \ref{fig_hGbG} plots the output $\hat{G}^*$  and $\bar{G}^*$ generated by Algorithm \ref{alg:hGbG} for  $2|x_t|$$\leq$$ \u_t$$\leq$$3|x_t|$, which shows the convergence of $\hat{G}^*$  and $\bar{G}^*$.

\begin{figure}
  \centering
  \includegraphics[width=250pt]{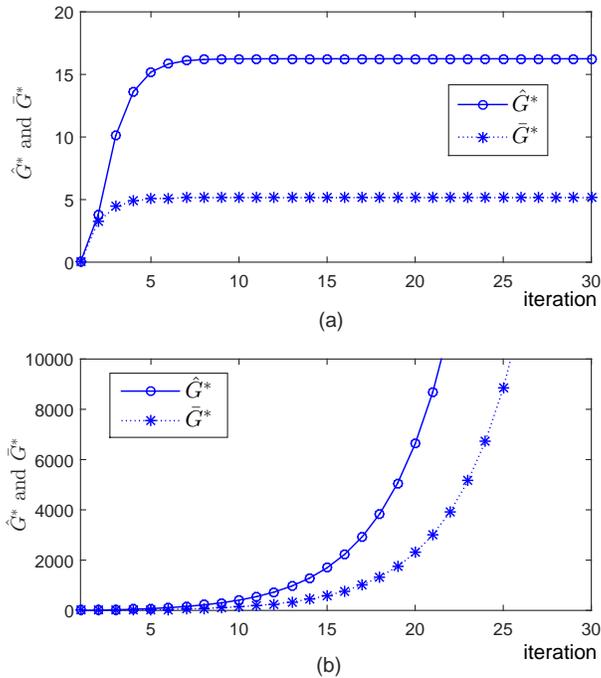}\\
  \caption{The outputs of $\hat{G}^*$ and $\bar{G}^*$ from the iterations of Algorithm \ref{alg:hGbG} for Example \ref{exam_counter}. The control constraint is $2|x_t|$$\leq$$ \u_t$$\leq$ $3|x_t|$ in Sub-figure (a) and $3|x_t|$$\leq$ $ \u_t$ $\leq $ $4|x_t|$ in sub-figure (b).}\label{fig_hGbG}
\end{figure}

\end{example}

From the previous example, we know that the conditions that guarantee the admissibility of (\ref{def_equ_hG}) and (\ref{def_equ_bG}) are quite complicated, since they depend on multiple parameters $A$, $\B$, $\C$, $\H$ and $\d$. Fortunately, we are able to derive the following sufficient conditions which are easily-checkable .

\begin{theorem}\label{thm_sufft}
For  problem $(\cP_{\LQ}^{\infty})$, if $\E[A^2]+\eta K_{\max} <1$, where $\eta$ is the maximum eigenvalue of $\E[\B^{\prime}\B]$ and $K_{\max}<\infty$ is a finite upper bound of $\|\K\|$ such that
$K_{\max}$ $=$ $\max$$\{$$\|\K \|$$~|~$$\H\K\leq \d\}$, then (\ref{def_equ_hG}) and (\ref{def_equ_bG}) admit a solution.
\end{theorem}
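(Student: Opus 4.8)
The plan is to reduce Theorem~\ref{thm_sufft} to the boundedness hypothesis already isolated in Theorem~\ref{thm_PLQinf}. That theorem guarantees a solution pair $\hG^*,\bG^*>0$ of (\ref{def_equ_hG})--(\ref{def_equ_bG}) as soon as the sequences $\{\hG_0^T\}$ and $\{\bG_0^T\}$ are uniformly bounded in $T$; moreover its proof already establishes that these sequences are nondecreasing and obey the forward recursion
\[
\hG_0^{T+1}=\min_{\K\in\cK}\hg(\K,\hG_0^T,\bG_0^T),\qquad
\bG_0^{T+1}=\min_{\K\in\cK}\bg(\K,\hG_0^T,\bG_0^T),
\]
with $\hG_0^0=\bG_0^0=0$. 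Hence it suffices to exhibit one finite $M$ with $\hG_0^T,\bG_0^T<M$ for all $T$. I would introduce $W_T:=\max(\hG_0^T,\bG_0^T)$ and aim for a scalar inequality $W_{T+1}\le c_{\max}+\rho\,W_T$ with $\rho<1$; since $W_0=0$, iterating then yields $W_T\le c_{\max}/(1-\rho)$ uniformly in $T$.

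To produce such an inequality I would split $\hg$ and $\bg$ into their deterministic quadratic part $\K'\R\K+2\S'\K+q$ and their stochastic part. The feasible set $\cK=\{\K\mid\H\K\le\d\}$ is a polyhedron contained in $\{\|\K\|\le K_{\max}\}$, hence compact, so the deterministic part is uniformly bounded by $c_{\max}:=\max_{\K\in\cK}(\K'\R\K+2\S'\K+q)<\infty$; this is the step that uses finiteness of $K_{\max}$. For the stochastic part I would use the elementary bound $y\1_{\{\cdot\ge0\}}+z\1_{\{\cdot<0\}}\le\max(y,z)$, giving $\hg(\K,y,z)\le c_{\max}+\max(y,z)\,\E[(A+\B\K)^2]$ and likewise $\bg(\K,y,z)\le c_{\max}+\max(y,z)\,\E[(A-\B\K)^2]$. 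Evaluating at the minimizing gains and setting $y=\hG_0^T$, $z=\bG_0^T$ gives $W_{T+1}\le c_{\max}+\rho\,W_T$ with $\rho:=\sup_{\K\in\cK}\max\{\E[(A+\B\K)^2],\E[(A-\B\K)^2]\}$.

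The crux, and the only place the quantitative assumption is used, is verifying $\rho<1$. Expanding $\E[(A\pm\B\K)^2]=\E[A^2]\pm2\E[A\B]\K+\K'\E[\B'\B]\K$ and using $\K'\E[\B'\B]\K\le\eta\|\K\|^2$ with $\|\K\|\le K_{\max}$, the leading terms are exactly $\E[A^2]$ and the $\eta K_{\max}$ contribution controlled by the hypothesis $\E[A^2]+\eta K_{\max}<1$; the delicate bookkeeping is handling the cross term $2\E[A\B]\K$ uniformly over the bounded set $\cK$, which I expect to be the main obstacle. Once $\rho<1$ is secured, the contraction yields $W_T\le c_{\max}/(1-\rho)$ for every $T$, so $M:=c_{\max}/(1-\rho)+1$ bounds both sequences uniformly, and Theorem~\ref{thm_PLQinf} then furnishes the desired solution $\hG^*,\bG^*>0$ of (\ref{def_equ_hG})--(\ref{def_equ_bG}), completing the proof.
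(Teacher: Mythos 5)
Your reduction to the boundedness hypothesis of Theorem \ref{thm_PLQinf} is exactly the paper's skeleton: both arguments use that $\hG_0^T,\bG_0^T$ are nondecreasing in $T$ and satisfy the one-step recursion, so only a uniform bound $M$ is needed. The gap is the one step you defer to the end: the claim $\rho<1$, with $\rho:=\sup_{\K\in\cK}\max\{\E[(A+\B\K)^2],\E[(A-\B\K)^2]\}$, is not ``delicate bookkeeping'' --- it is false in general under the theorem's hypothesis, because $\E[A^2]+\eta K_{\max}<1$ places no restriction whatsoever on the cross term $2\E[A\B]\K$. Concretely, take the scalar distribution of Example \ref{exam_counter}, for which $\E[A^2]=0.402$, $\E[A\B]=0.452$, $\eta=\E[\B^2]=0.53$, and impose the constraint $\|\K\|\leq K_{\max}=1$ (so the hypothesis holds: $0.402+0.53=0.932<1$). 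At the feasible point $\K=1$ one gets $\E[(A+\B\K)^2]=0.402+0.904+0.53=1.836>1$, so your $\rho>1$ and the iteration $W_{T+1}\leq c_{\max}+\rho W_T$ gives no bound at all. No estimate of the stochastic part that is uniform over $\cK$ can succeed here, and switching from the sup to a fixed comparison point does not rescue the argument either ($\0$ need not belong to $\cK$, and the hypothesis still does not control $\E[A\B]\K$ on $\cK$).

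The idea your proposal is missing --- and the heart of the paper's proof --- is to exploit the \emph{optimality} of the gain at which the recursion is evaluated, rather than to bound the objective crudely over the feasible set. The paper writes the KKT stationarity condition for $\hK^*$ with multiplier $\lambda^*\geq\0_{m\times 1}$ for $\H\K\leq\d$, left-multiplies it by $(\hK^*)^{\prime}$, and invokes complementary slackness $(\lambda^*)^{\prime}(\H\hK^*-\d)=0$; this lets one substitute out precisely the troublesome weighted cross term $2\E[A\B\hK^*(\cdots)]$ from the expression for $\hG_{i+1}$. After the cancellation the recursion becomes
\begin{align*}
\hG_{i+1}=\E\big[\big(A^2-(\hK^*)^{\prime}\B^{\prime}\B\hK^*\big)\big(\hG_i\1_{\{A+\B\hK^*\geq 0\}}+\bG_i\1_{\{A+\B\hK^*<0\}}\big)\big]
+q-(\hK^*)^{\prime}\R\hK^*-\d^{\prime}\lambda^*,
\end{align*}
so the quadratic term $(\hK^*)^{\prime}\B^{\prime}\B\hK^*$ now enters with a \emph{negative} sign and the linear-in-$A\B$ term is gone. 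The coefficient multiplying $\max\{\hG_i,\bG_i\}$ is then at most $\E[A^2]+\eta K_{\max}$, which the hypothesis makes smaller than one, while the remaining terms are bounded by a constant independent of $i$ (here finiteness of $K_{\max}$ is used, as in your $c_{\max}$); the symmetric manipulation handles $\bG_{i+1}$, and from that point your contraction-and-monotonicity argument closes the proof exactly as you intended. Without this use of the first-order condition, the stated hypothesis simply does not imply the inequality your plan requires.
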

\begin{IEEEproof}
We consider one cycle of iteration of Algorithm \ref{alg:hGbG} given in (\ref{alg:hGbG_hG}) and (\ref{alg:hGbG_bG}). We first check the optimization problem in (\ref{alg:hGbG_hG}). Introducing the Lagrange multiplier $\lambda\in \R^m_+$ yields the following Lagrange function, $L(\K, \lambda):$ $=$ $ \hg(\K, \hG_i, \bG_i) +\lambda^{\prime}(\H \K-\d).$
From the convex duality theory,  the optimal solution can be characterized as
\begin{align}
\hK^*:=\arg \min_{\K \in \R^n }~ L(\K,\lambda^*) \label{thm_sufft_Klambda}
\end{align}
with $\lambda^*$ being the optimal solution of the dual problem,
$\lambda^*$$=$$\arg \max_{\lambda \geq 0}$$\{\min_{\K\in \R^n} L(\K,\lambda)\}$. The first order optimality condition of problem (\ref{thm_sufft_Klambda}) is given as
\begin{align}
&\nabla_{\K} L(\K,\lambda)=\E\big[ 2\R\hK^* +2\S +2\B^{\prime}(A+\B \hK^*)\times\notag\\
&(\hG_i\1_{\{A+\B\hK^*\geq 0 \}}+\bG_i\1_{\{A+\B\hK^* < 0 \}}) \big]+\H^{\prime}\lambda=\0_{n\times 1}. \label{thm_sufft_condition}
\end{align}
Substituting $\hK^*$ into (\ref{alg:hGbG_hG}) gives rise to
\small
\begin{align}
&\hG^*_{i+1}=\E\big[(\hK^*)^{\prime}\R \hK^* +2\S^{\prime}\hK^*+q+ \big(A^2+ 2A\B\hK^*\notag\\
&+(\hK^*)^{\prime}\B^{\prime}\B\hK^{*}\big)
\big(\hG_i\1_{\{A+\B\hK^*\geq 0\}}+\bG_i \1_{\{A+\B\hK^*< 0\}} \big).\label{thm_sufft_Gi+1}
\end{align}
\normalsize
Note that the complementary slackness condition is given as $\lambda^{\prime}$ $(\H\K-\d)$ $=$ $0$. Now, combining (\ref{thm_sufft_condition}) with (\ref{thm_sufft_Gi+1}) yields
\begin{align*}
&\hG_{i+1}=\E[ (A^2- (\hK^*)^{\prime}\B^{\prime}\B \hK^*)(\hG_i \1_{\{A+\B\hK^*\geq 0} \notag\\
&+\bG_i \1_{\{A+\B\hK^*< 0\}})]+q-(\hK^*)^{\prime}\R\hK^*-\d^{\prime}\lambda^*.\notag
\end{align*}
Note that, under our assumption, the last three terms are bounded by some constant $L$. Thus,
\begin{align}
&\hG_{i+1}\leq \big|\E[ A^2- (\hK^*)^{\prime}\B^{\prime}\B \hK^*]\big| \max\{\hG_i,\bG_i\}+ L\notag\\
&\leq (\E[A^2]+\eta K_{\max})\max\{\hG_i,\bG_i \} +L.\label{thm_sufft_ineq1}
\end{align}
Similarly, for $\bG_{i+1}$, we can derive the following inequality,
\begin{align}
\bG_{i+1}\leq (\E[A^2]+\eta K_{\max})\max\{\hG_i,\bG_i \} + L. \label{thm_sufft_ineq2}
\end{align}
Obviously, the condition $(\E[A^2]+\eta K_{\max})$ $<$ $1$ guarantees that both $\hG_{i}$ and $\bG_{i}$ are bounded from avove when $i$ $\rightarrow$ $\infty$. Since $\hG_{i}$ and $\bG_{i}$ are nondecreasing sequence (refer to the proof of Theorem \ref{thm_PLQinf}), we can conclude that both $\hG_{i}$ and $\bG_{i}$ converge to some finite numbers.
\end{IEEEproof}

\section{Application in Dynamic MV Portfolio Selection }\label{se_mv}
In this section, we illustrate how to apply the derived results in Section \ref{se_sol_P} to solve the dynamic MV portfolio selection problem. We use the similar notations as given in \cite{GaoLiCuiWang:2015}. An investor enters the market with initial wealth $x_0$ and invests simultaneously in $n$ risky assets and one risk free asset. Suppose that the investment horizon is $T$ periods, and the investor decides his/her adaptive investment policy at $t=0, 1, \cdots, T-1$. Denote the deterministic risk free rate as $r_t$, and the random return vector of risk assets in period $t$ as $\e_t$$\triangleq$ $ \left(
e_t^1,  e_t^2,  \cdots,  e_t^n \right)$.\footnote{If $S_t^i$ is the price of $i$-the asset at time $t$, then $e^i_t$ $=$ $S^i_{t+1}/S^i_t$.}  All the randomness is modeled by a standard probability space $\{\Omega,\{ \mathcal{F}_t\}|_{t=0}^T,\mathbb{P} \}$, where $\Omega$ is the event set, $\mathcal{F}_t$ is the $\sigma$-algebra of the events available at time $t$, and $\mathbb{P}$ is the probability measure.\footnote{At time $t$, $t=1,\cdots,T$, the filtration $\cF_t$ is the smallest sigma algebra generated by the realization of $\e_0,\e_1,\cdots, \e_{t-1}$.}  We use the notations $\E_t[\cdot ]$ to denote the conditional expectation $\E[\cdot |\mathcal{F}_t ] $. Let $u_t^i$ be the dollar amount invested in the $i$-th risky asset in period $t$, $i=1,\cdots,n$, and $x_t$ be the wealth level in period $t$. Then, the dynamics of the wealth evolves according to the following stochastic difference equation,
\begin{align}
x_{t+1} =r_tx_t+\P_t\u_t, ~~t=0,\cdots,T-1,\label{def_wealth_process}
\end{align}
where  $\u_t$ $\triangleq$ $\left(u_t^1, u_t^2, \cdots, u_t^n
                     \right)
^{\prime}$ and $\textbf{P}_t$ $\triangleq$ $(P_t^1$,$P_t^2$, $\cdots$, $P_t^n)$ $=$ $(e_t^1-r_t$, $e_t^2-r_t$, $\cdots$, $e_t^n-r_t)$
is the excess return vector. Denote $\gamma_t$$:=$$\prod_{k=t}^{T-1} r_k $ for $t=0,\cdots T-1$ and $\gamma_T=1$ as the discount factor at time $t$. The investor then adopts the following mean-variance formulation to guide his investment,
\begin{align}
(\MV):~ \min_{\{\u_t\}\mid_{t=0}^{T-1}}~&~ \Var_0[x_T]
+ \sum_{t=0}^{T-1}\E_0\left[\u_t^{\prime}\R_t\u_t \right]\label{mv_obj}\\
\textrm{s.t.}~~& \E[{x}_T]=x_{d}, \notag\\
                     ~~&x_{t+1} =r_tx_t+\P_t\u_t, ~~t=0,\cdots,T-1,\notag\\
                     ~~& \u_t\geq0, ~~t=0,\cdots,T-1, \label{mv_noshort}
\end{align}
where $x_d$ is a pre-given target wealth level, $\Var_0[x_T]$ $:=$ $\E_0[x_T^2]$$-$$\E_0[x_T]^2$ is the variance of the terminal wealth and the constraint (\ref{mv_noshort}) implies no short-selling. The second term in objective function (\ref{mv_obj}) is a penalty for the investment in the risky asset, e.g., the investor wants to control his/her wealth exposed to the risky assets.
\begin{assumption}\label{asmp_mv}
 We assume $x_d$ $>$ $\gamma_0 x_0 $.
\end{assumption}
Assumption \ref{asmp_mv} enforces the target wealth level $x_d$ to be greater than the terminal wealth level resulted from investing the whole initial wealth in the risk free asset. Compared with the models studied in \cite{CuiGaoLiLi:2014} and \cite{CuiLiLi:2015}, our model allows correlated returns $\e_t$  over time, no sign constraint on $\E[\P_t]$\footnote{In \cite{CuiGaoLiLi:2014}, the assumption of $\E[\P_t]>0$ is critical to derive the solution.}, and also the inclusion of the penalty term $\u_t^{\prime}\R_t\u_t$ for a purpose of controlling risk exposure.

To solve problem $(\MV)$, we first reformulate it as a special case of the model $(\cP_{\LQ}^T)$. Since the variance is not separable in the sense of dynamic programming, we adopt the embedding method first proposed in \cite{LiNg:2000}. More specifically, introducing the Lagrange multiplier $\lambda\in \mathbb{R}$ for $ \E[{x}_T]=x_d$ yields the following Lagrange function,
\begin{align*}
&\Var_0[x_T]+ 2\lambda(\E[x_T]-x_d)+ \sum_{t=0}^{T-1}\E_0\left[\u_t^{\prime}\R_t\u_t \right]\\
&=\E_0[(x_T-x_d)^2+2\lambda(x_T-x_d)]+\sum_{t=0}^{T-1}\E_0\left[\u_t^{\prime}\R_t\u_t \right],
\end{align*}
which further leads to the following auxiliary problem,
\begin{align*}
(\widehat{\MV}(\lambda)):&\min_{\{\u_t\}\mid_{t=0}^{T-1}}\E_0\Big[(x_T -(x_d-\lambda))^2\Big]
+\sum_{t=0}^{T-1}\E_0\left[\u_t^{\prime}\R_t\u_t \right]\\
&~~~~~~~~~~~~~~~~~~~~~~~~~~~~~~~~~~~~~~~~-\lambda^2\\
\textrm{s.t.}~&~x_{t+1} =r_tx_t+\P_t\u_t, ~~t=0,\cdots,T-1,\notag\\
  ~& ~\u_t\geq0, ~~t=0,\cdots,T-1.
\end{align*}
Replacing the state variable $x_t$ in $(\widehat{\MV}(\lambda))$  by
$x_t$$=$$w_t + (x_d-\lambda)/\gamma_t$ gives rise to the following equivalent problem,
\begin{align*}
(\overline{\MV}(\lambda)): \min_{\{\u_t\}\mid_{t=0}^{T-1}}~&\E_0\big[w_T^2\big]
+\sum_{t=0}^{T-1}\E_0\left[\u_t^{\prime}\R_t\u_t \right]\\
\textrm{s.t.}~&w_{t+1} =r_tw_t+\P_t\u_t, ~~t=0,\cdots,T-1,\notag\\
  ~& \u_t\geq0, ~~t=0,\cdots,T-1.
\end{align*}
It is not hard to see that problem $(\overline{\MV}(\lambda))$ is just a special case of problem $(\cP_{\LQ}^T)$ by letting $A_t=r_t$, $\B_t=\P_t$, $q_t=0$, $\S_t=\0_{n\times 1}$ for $t$ $=$ $0, \cdots, T-1$ and $q_T=1$. Moreover, we introduce the following two mappings: $\hg_t^{\MV}(\K,y,z):$ $\bR^n \times \bR_+ \times \bR_+$ $\rightarrow$ $\bR$ and $\bg_t^{\MV}(\K,y,z):$ $\bR^n \times \bR_+ \times \bR_+$ $\rightarrow$ $\bR$,
\begin{align}
&\hg_t^{\MV}(\K, y, z):=\E_t \Big[ \K^{\prime}\R_t \K+(r_t+\P_t \K)^2\notag\\
&~~~~~~~~~\times \big( y\1_{\{r_t+\P_t \K \geq 0\}}
+ z\1_{\{r_t+\P_t \K <0\}}\big)\Big], \label{def_mv_hgt}\\
&\bg_t^{\MV}(\K,y,z):=
	\E_t \Big[\K^{\prime}\R_t \K +(r_t-\P_t  \K)^2\notag\\
&~~~~~~~~~\times\big(y\1_{\{r_t-\P_t \K \leq 0\}}
+z\1_{\{r_t-\P_t \K > 0\}} \big)\Big].\label{def_mv_bgt}
\end{align}
As the same as for problem $(\cP_{\LQ}^T)$, we further define the following two random variables recursively,
\begin{align}
\hG^{\MV}_t&=\min_{\K \geq 0}~\hg_t^{\MV}( \K, \hG^{\MV}_{t+1}, \bG^{\MV}_{t+1}), \label{def_hatGt_mv}\\
\bG^{\MV}_t&=\min_{\K \geq 0}~\bg_t^{\MV}( \K, \hG^{\MV}_{t+1}, \bG^{\MV}_{t+1}), \label{def_barGt_mv}
\end{align}
with boundary conditions of $\hG^{MV}_{T}=1$ and $\bG^{MV}_{T}=1$. We have the following properties for the sequences $\{\hG^{\MV}_k\}|_{k=0}^T$ and $\{\bG^{\MV}_k\}|_{k=0}^T$.
\begin{lemma}\label{lem_hbG}
For any $t=0,1,\cdots, T-1$,
\begin{align*}
&\hG_t^{\MV}\leq r_t^2\E_t\big[ \hG^{\MV}_{t+1}\1_{\{\hG^{\MV}_{t+1}\geq \bG^{\MV}_{t+1}\}}+\bG^{\MV}_{t+1}\1_{\{\hG^{\MV}_{t+1}< \bG^{\MV}_{t+1}\}}\big],\\
&\bG_t^{\MV}\leq r_t^2\E_t\big[ \hG^{MV}_{t+1}\1_{\{\hG^{\MV}_{t+1}\geq \bG^{\MV}_{t+1}\}}+\bG^{\MV}_{t+1}\1_{\{\hG^{\MV}_{t+1}< \bG^{\MV}_{t+1}\}}\big].
\end{align*}
Furthermore, $\hG^{\MV}_0< \gamma_0^2$ and $\bG^{\MV}_{0}<\gamma_0^2$.
\end{lemma}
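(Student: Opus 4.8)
The plan is to build everything on the single feasible point $\K=\0$ in the minimizations defining $\hG_t^{\MV}$ and $\bG_t^{\MV}$, obtaining the two displayed inequalities directly and the strict bound by a backward telescoping argument.

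For the inequalities I would evaluate each objective at $\K=\0$. Since $\R_t\succeq0$ the penalty term $\K'\R_t\K$ vanishes, and because $r_t$ is deterministic the factor $(r_t\pm\P_t\K)^2$ collapses to $r_t^2$ while exactly one of the two indicators in (\ref{def_mv_hgt})--(\ref{def_mv_bgt}) survives. Hence $\hg_t^{\MV}(\0,\hG_{t+1}^{\MV},\bG_{t+1}^{\MV})=r_t^2\,\E_t[\hG_{t+1}^{\MV}\1_{\{r_t\geq0\}}+\bG_{t+1}^{\MV}\1_{\{r_t<0\}}]$, and symmetrically for $\bg_t^{\MV}$. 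Because $\hG_{t+1}^{\MV},\bG_{t+1}^{\MV}\geq0$ (both are optimal values of nonnegative objectives), the surviving weight is bounded pointwise by $\max(\hG_{t+1}^{\MV},\bG_{t+1}^{\MV})$, which is exactly $\hG_{t+1}^{\MV}\1_{\{\hG_{t+1}^{\MV}\geq\bG_{t+1}^{\MV}\}}+\bG_{t+1}^{\MV}\1_{\{\hG_{t+1}^{\MV}<\bG_{t+1}^{\MV}\}}$. Since $\0$ is feasible ($\0\geq\0$), the minimum cannot exceed the value at $\0$, and both displayed inequalities follow at once.

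For the global bound I would set $D_t:=\max(\hG_t^{\MV},\bG_t^{\MV})$. The two inequalities give $D_t\leq r_t^2\,\E_t[D_{t+1}]$, and because the future gross returns $r_{t+1},\dots,r_{T-1}$ are deterministic, iterating this relation backward from the terminal value $D_T=\max(1,1)=1$ telescopes to $D_0\leq\prod_{k=0}^{T-1}r_k^2=\gamma_0^2$; in particular $\hG_0^{\MV}\leq\gamma_0^2$ and $\bG_0^{\MV}\leq\gamma_0^2$.

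The hard part is upgrading these to \emph{strict} inequalities. Equality $\hG_0^{\MV}=\gamma_0^2$ would say that the do-nothing policy $\u_t\equiv\0$, whose transformed cost in $(\overline{\MV}(\lambda))$ is precisely $\gamma_0^2 w_0^2$, is optimal for the embedded problem, so the plan is to exhibit a feasible control that strictly beats it. Concretely I would perturb $\K=\0$ along an admissible direction $\v\geq\0$ and show, using $\Cov[\P_t]=\E[\P_t'\P_t]-\E[\P_t']\E[\P_t]\succ0$ (which makes $\P_t\v$ genuinely random for suitable $\v$) together with the target condition $x_d>\gamma_0 x_0$ of Assumption \ref{asmp_mv} (which forbids never investing, since risk-free wealth alone cannot reach the target), that the change of $\hg_s^{\MV}$ or $\bg_s^{\MV}$ at $\0$ is strictly negative at some stage $s$. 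Strictness then propagates downward: once $\hG_s^{\MV}<\gamma_s^2$ or $\bG_s^{\MV}<\gamma_s^2$, the strictly positive factors $r_k^2$ in the telescoping relation carry the strict inequality to $t=0$. The delicate point I expect to wrestle with is the no-short constraint $\K\geq\0$: the improving direction must lie in the nonnegative cone, so the argument must track the sign structure of $\E_t[\P_t]$ and, in the branch where the first-moment effect is unfavorable, fall back on the regime-switch gain generated by $\hG_{s+1}^{\MV}\neq\bG_{s+1}^{\MV}$; securing this gain simultaneously for both the $\hg$- and $\bg$-recursions is the crux.
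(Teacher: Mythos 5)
Your treatment of the two displayed inequalities and of the weak bounds $\hG_0^{\MV}\leq\gamma_0^2$, $\bG_0^{\MV}\leq\gamma_0^2$ is correct, and it is actually more elementary than what the paper points to: the paper omits the proof entirely, remarking only that it goes ``in a similar way to the proof of Theorem~\ref{thm_sufft} by using the Lagrange duality theory.'' Your route---evaluate at the feasible point $\K=\0$, note $\hg_t^{\MV}(\0,\hG^{\MV}_{t+1},\bG^{\MV}_{t+1})=r_t^2\,\E_t[\hG^{\MV}_{t+1}]$ and $\bg_t^{\MV}(\0,\hG^{\MV}_{t+1},\bG^{\MV}_{t+1})=r_t^2\,\E_t[\bG^{\MV}_{t+1}]$ (for $r_t>0$), bound the integrand by the pointwise maximum, and telescope $D_t:=\max(\hG^{\MV}_t,\bG^{\MV}_t)\leq r_t^2\,\E_t[D_{t+1}]$ from $D_T=1$---delivers exactly the same non-strict conclusions that the duality computation of Theorem~\ref{thm_sufft} (specialized to $q_t=0$, $\S_t=\0_{n\times 1}$, $\d_t=\0_{n\times 1}$) would give, with far less machinery.

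The strict inequalities are a genuine gap, and the plan you sketch cannot close it, for two concrete reasons. First, Assumption~\ref{asmp_mv} is irrelevant here: $\hG^{\MV}_t$ and $\bG^{\MV}_t$ are defined by the recursions (\ref{def_hatGt_mv})--(\ref{def_barGt_mv}), which involve only the law of $(r_t,\P_t)$ and the penalty matrices $\R_t$; the quantities $x_d$, $x_0$, $\lambda$ never enter, so ``the target forbids never investing'' cannot constrain these recursions in any way. Second, and more fundamentally, the strict claim for $\hG^{\MV}_0$ is false without sign hypotheses on the expected excess returns, so no perturbation argument exists. Take $r_t=1$ and i.i.d.\ returns with $\E[\P_t]\geq\0_{1\times n}$ componentwise (nonnegative risk premia, fully compatible with $\Cov[\P_t]\succ0$). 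Then $\hG^{\MV}_{t+1}$ is deterministic and $\nabla_{\K}\hg^{\MV}_t(\0,\hG^{\MV}_{t+1},\bG^{\MV}_{t+1})=2\hG^{\MV}_{t+1}\E[\P_t^{\prime}]\geq\0_{n\times 1}$, so every feasible direction $\v\geq\0_{n\times 1}$ has nonnegative directional derivative; by convexity (Lemma~\ref{lem_convex}) $\K=\0$ is then a \emph{global} minimizer of $\hg^{\MV}_t$ over the cone $\{\K\geq\0_{n\times 1}\}$, whence $\hG^{\MV}_t=r_t^2\,\E[\hG^{\MV}_{t+1}]$ holds with equality at every stage and $\hG^{\MV}_0=\gamma_0^2$ exactly. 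Your fallback---the ``regime-switch gain'' from $\hG^{\MV}_{t+1}\neq\bG^{\MV}_{t+1}$---is unavailable precisely because the minimizer sits at $\K=\0$, where the indicator never switches; and if moreover $\E[\P_t]=\0_{1\times n}$, the same argument kills the strict claim for $\bG^{\MV}_0$ too. So strictness genuinely requires unstated hypotheses (roughly: $\E_t[\hG^{\MV}_{t+1}\P_t^{\prime}]$ must have a strictly negative entry and $\E_t[\bG^{\MV}_{t+1}\P_t^{\prime}]$ a strictly positive entry at some stage). It is worth noting that only $\bG^{\MV}_0<\gamma_0^2$ is actually used downstream---it makes the denominator of (\ref{thm_MV_lambda}) in Theorem~\ref{thm_MV} nonzero---and that inequality does hold whenever some asset carries a strictly positive premium; the claim for $\hG^{\MV}_0$ is the one that cannot be rescued in general.
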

We omit the  proof of Lemma \ref{lem_hbG}, as we can prove it in a  similar way to the proof of Theorem \ref{thm_sufft} by using the Lagrange duality theory.  The following theorem characterizes the solution of problem $(\MV)$.

\begin{theorem}\label{thm_MV}
The following policy solves problem $(\MV)$,
\begin{align}
\u_t^*=\begin{dcases}
\Big(x_t-\frac{x_d-\lambda^*}{\gamma_t} \Big)\hK^{\MV}_t &~\textrm{if}~~x_t-\frac{x_d-\lambda^*}{\gamma_t}\geq 0,\\
-\Big(x_t-\frac{x_d-\lambda^*}{\gamma_t} \Big)\bK^{\MV}_t &~\textrm{if}~~x_t-\frac{x_d-\lambda^*}{\gamma_t}<0,
\end{dcases}\label{thm_MV_u}
\end{align}
where $\hK^{\MV}_t$ and $\bK^{\MV}_t$ are computed by
\begin{align*}
&\hK^{\MV}_t=\arg \min_{\K_t \geq 0 }~~\hg^{\MV}_t(\K_t, \hG^{\MV}_{t+1}, \bG^{\MV}_{t+1}),\\
&\bK^{\MV}_t=\arg \min_{\K_t \geq 0 }~~\bg^{\MV}_t(\K_t, \hG^{\MV}_{t+1}, \bG^{\MV}_{t+1}),
\end{align*}
with $\{\hG^{\MV}_t\}|_{t=0}^{T}$ and $\{\bG^{\MV}_t\}|_{t=0}^{T}$ being calculated from (\ref{def_hatGt_mv}) and (\ref{def_barGt_mv}), respectively. The optimal Lagrange multiplier $\lambda^*$ is
\begin{align}
\lambda^*=\frac{\bG^{\MV}_0(x_d-\gamma_0 x_0) }{\bG^{\MV}_0-\gamma_0^2}.\label{thm_MV_lambda}
\end{align}

\end{theorem}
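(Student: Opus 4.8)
The plan is to exploit the reduction, already carried out above, of $(\MV)$ to the parametric family $(\overline{\MV}(\lambda))$, which is a special instance of $(\cP_{\LQ}^T)$ with $A_t=r_t$, $\B_t=\P_t$, $\S_t=\0$, $q_t=0$ and $q_T=1$, together with the Li--Ng embedding that links the mean--variance problem to the Lagrangian relaxation $(\widehat{\MV}(\lambda))$. First I would record that, because the change of variable $w_t=x_t-(x_d-\lambda)/\gamma_t$ carries $(\widehat{\MV}(\lambda))$ into $(\overline{\MV}(\lambda))$ up to the additive constant $-\lambda^2$, Theorem \ref{thm_PLQ} applies verbatim: its optimal feedback law, written in the $w$-variable as $\u_t=\hK_t^{\MV}w_t$ when $w_t\geq 0$ and $\u_t=-\bK_t^{\MV}w_t$ when $w_t<0$, translates directly into the candidate policy (\ref{thm_MV_u}), and the optimal value is
\begin{align*}
v(\widehat{\MV}(\lambda))=w_0^2\big(\hG_0^{\MV}\1_{\{w_0\geq 0\}}+\bG_0^{\MV}\1_{\{w_0<0\}}\big)-\lambda^2,\qquad w_0=x_0-\frac{x_d-\lambda}{\gamma_0}.
\end{align*}

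Second, I would invoke the Li--Ng duality result: since $(\MV)$ is convex (the variance and the quadratic penalty are convex, the wealth dynamics and the no-shorting set are linear) with a single linear equality constraint, strong duality holds and $v(\MV)=\max_{\lambda\in\bR}v(\widehat{\MV}(\lambda))$, with the maximizer $\lambda^*$ producing an optimal $\u^*$ that is automatically feasible for $(\MV)$. This turns the problem into the one-dimensional maximization of $d(\lambda):=v(\widehat{\MV}(\lambda))$ over $\lambda$.

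The third and most delicate step is to carry out that maximization, which requires handling the piecewise structure of $d$. Writing $w_0=(\gamma_0 x_0-x_d+\lambda)/\gamma_0$ and using Assumption \ref{asmp_mv} ($x_d>\gamma_0 x_0$), the sign of $w_0$ switches at $\lambda=x_d-\gamma_0 x_0>0$. On each branch $d$ is a quadratic in $\lambda$ whose leading coefficient is $\hG_0^{\MV}/\gamma_0^2-1$ or $\bG_0^{\MV}/\gamma_0^2-1$; here Lemma \ref{lem_hbG}, which gives $\hG_0^{\MV}<\gamma_0^2$ and $\bG_0^{\MV}<\gamma_0^2$, is exactly what makes both coefficients negative, so $d$ is concave on each piece. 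I would then check that $d$ is $C^1$ at the breakpoint $\lambda=x_d-\gamma_0 x_0$ (both one-sided derivatives equal $-2(x_d-\gamma_0 x_0)$ because the $w_0$-dependent term vanishes there), which upgrades piecewise concavity to global concavity, so the unique stationary point is the global maximizer. Solving $d'(\lambda)=0$ on the branch $w_0<0$ gives $\lambda^*=\bG_0^{\MV}(x_d-\gamma_0 x_0)/(\bG_0^{\MV}-\gamma_0^2)$, and one verifies $\lambda^*<0<x_d-\gamma_0 x_0$ (again using $\bG_0^{\MV}<\gamma_0^2$), confirming the maximizer indeed lies on the assumed branch. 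Substituting this $\lambda^*$ into the feedback law yields (\ref{thm_MV_u}) and (\ref{thm_MV_lambda}).

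The main obstacle I anticipate is not the algebra of the $\lambda$-maximization but two conceptual points: rigorously justifying the strong-duality/max--min interchange of the embedding (in particular that the dual optimizer recovers a primal-feasible control satisfying $\E[x_T]=x_d$), and correctly tracking the indicator branches of the value function so that the correct one ($w_0<0$, hence the $\bG_0^{\MV}$ formula) is selected. Lemma \ref{lem_hbG} is the linchpin that resolves the second point, since it simultaneously guarantees the concavity of $d$ on both pieces and fixes the sign of $\lambda^*$ relative to the breakpoint.
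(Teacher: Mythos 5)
Your proposal is correct and follows essentially the same route as the paper: reduce $(\MV)$ via the embedding to $(\overline{\MV}(\lambda))$, apply Theorem \ref{thm_PLQ} to get the feedback law and the value $v(\widehat{\MV}(\lambda))$, and then maximize over $\lambda$ using Lemma \ref{lem_hbG} to get concavity and the formula (\ref{thm_MV_lambda}). The only difference is that you spell out the details the paper compresses into ``it is not hard to find the optimal $\lambda^*$'' --- the $C^1$ matching at the breakpoint, global concavity, and the verification that $\lambda^*<0<x_d-\gamma_0 x_0$ places the maximizer on the $\bG_0^{\MV}$ branch --- which is a welcome tightening of the same argument.
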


\begin{IEEEproof}
For any fix $\lambda$, applying Theorem \ref{thm_PLQ} generates the following optimal policy for $(\overline{\MV}(\lambda))$,
\begin{align}
\u^*_t(\lambda)= w_t (\hK^{\MV}_t \1_{\{w_t \geq 0\}}- \bK^{\MV}_t \1_{\{w_t < 0\}}).\label{thm_MV_ut(lmd)}
\end{align}
The remaining task is to find the optimal Lagrange multiplier $\lambda$, which can be identified by solving the dual problem, $\lambda^*=\max_{\lambda \in \R^n}~v(\widehat{\MV}(\lambda))$. Applying Theorem \ref{thm_PLQ}, while replacing $w_0$ by $x_0$, yields
\begin{align*}
&\gamma_0^2 v(\widehat{\MV}(\lambda) )=(\gamma_0x_0-x_d +\lambda)^2\big(\hG_0^{\MV}\1_{\{ \lambda \geq x_d-x_0\gamma_0\}}\notag\\
&~~~~~~~~~~~+\bG^{\MV}_0\1_{\{ \lambda < x_d-x_0\gamma_0\}}\big)-\gamma_0^2\lambda^2\notag\\
&=\begin{dcases}
\lambda^2(\hG^{\MV}_0-\gamma_0^2)+2\hG^{\MV}_0\lambda(\gamma_0x_0-x_d)\\
~~~~~~~~+\hG^{\MV}_{0}(\gamma_0x_0-x_d)^2&\textrm{if}~\lambda\geq x_d-x_0\gamma_0,\\
\lambda^2(\bG^{\MV}_0-\gamma_0^2)+2\bG^{\MV}_0\lambda(\gamma_0x_0-x_d)\\
~~~~~~~~+\bG^{\MV}_{0}(\gamma_0x_0-x_d)^2 & \textrm{if}~\lambda< x_d-x_0\gamma_0.\\
\end{dcases}
\end{align*}
From Lemma \ref{lem_hbG} we know that $\hG^{\MV}_0\leq \gamma_0^2$ and $ \bG^{\MV}_0 \leq \gamma^2_0$, which implies $v(\widehat{\MV}(\lambda))$ is a piece-wise concave function of $\lambda$. It is not hard to find the optimal $\lambda^*$ as given in (\ref{thm_MV_lambda}). Then replacing $w_t$ by $x_t$ in (\ref{thm_MV_ut(lmd)}) with $\lambda^*$ yields (\ref{thm_MV_u}).
\end{IEEEproof}
As for the mean-variance efficient frontier, substituting $\lambda^*$ back to  $v(\widehat{\MV}(\lambda))$ gives rise to
\begin{align*}
&\Var[x^*_T]= v(\widehat{\MV}(\lambda^*)-\sum_{k=0}^{T-1}\E_0[(\u^*_t)^{\prime}\R_t\u^*_t] \\
&= \frac{ \bG^{\MV}_0 ( \E[x_T^*] -x_0\gamma_0)^2 }{ \gamma_0^2-\bG^{\MV}_0}-\sum_{k=0}^{T-1}\E_0[(\u^*_t)^{\prime}\R_t\u^*_t].
\end{align*}
Note that, in the above expression, the second term does not have analytical expression. However, it can be evaluated by the Monte Carlo simulation method, once we compute all $\bK^{\MV}_t$ and $\hK^{\MV}_t$.

\section{Examples and Application}\label{se_exampl}
In this section, we first provide a few examples to illustrate our solution procedure for solving $(\cP_{\LQ}^{T})$ and $(\cP_{\LQ}^{\infty})$. Then, we consider a real application of $(\cP_{\LQ}^{T})$
to solve the dynamic portfolio optimization problem.

\subsection{Illustrative Examples of LQ Model}\label{sse_example_control}
\begin{example}\label{exam_control}
We first consider a simple example of $(\cP_{LQ}^{T})$ with $n=3$ and $T=5$. The cost matrices are
\begin{align*}
\R_t=\left(
       \begin{array}{ccc}
         1.2 & 0.3 & 0.4 \\
         0.3 & 1.4 & -0.3 \\
         0.4 & -0.3 & 1.9 \\
       \end{array}
     \right),~~\S_t=\left(
                      \begin{array}{c}
                        -0.2 \\
                        0.6 \\
                        -0.5\\
                      \end{array}
                    \right)
\end{align*}
and $q_t=1.1$ for $t=0,1,\cdots, 4$ with $q_5=1$. We consider two kinds of uncertain system parameters, namely, the independent identically distributed case and a correlated Markovian case.

\textbf{Case 1:} In the first case, we assume $A_t$ and $\B_t$, $t=0,1,2,3,4$, follow the identical discrete distribution with five scenarios as follows,
\begin{align}
&A_t\in \{-0.7,~-0.6,~0.9,~1,~ 1.1\}, \label{exam_control_A}\\
&\B_t \in \Big\{\left(\begin{array}{ccc}
	        	0.18 \\    -0.05 \\    -0.14
            	\end{array}
          \right),
          \left(\begin{array}{ccc}
		    	0.03 \\   -0.12 \\    -0.03
		    	\end{array}
	      \right),
		  \left(\begin{array}{ccc}
	           -0.05 \\     0.05 \\    0.05
	            \end{array}
          \right)\notag\\
	      &\left(\begin{array}{ccc}
	          -0.01 \\     0.05 \\     0.01
	            \end{array}
	      \right),
          \left(\begin{array}{ccc}
	          -0.05 \\     0.01 \\     0.06
                \end{array}
	      \right) \Big\},\label{exam_control_B}
\end{align}
 each of which has the same probability $0.2$. We also consider the following control constraint, $\underline{\d}_t |x_t|$ $\leq$ $\u_t$ $\leq$ $\overline{\d}_t|x_t|$ with
$\underline{\d}_t=\left(\begin{array}{ccc}
 0.1 \\ 0.1 \\ 0.1
 \end{array} \right)$, $\overline{\d}_t=
\left(\begin{array}{ccc}
0.5 \\ 0.5 \\ 0.5
\end{array} \right)$.
By using Theorem \ref{thm_PLQ}, we can identify the optimal control of problem $(\cP_{\LQ}^T)$ as
$\u_t^*(x_t)$ $=$ $\hat{\K}_t x_t \1_{\{x_t \geq 0\}}$ $-$ $\bar{\K}_t x_t \1_{\{x_t < 0\}}$, $t$ $=$ $0$,$\cdots$,$4$, where $\hK_t$ and $\bK_t$ are specified as follows,
\small
\begin{align*}
&\hK_0 =\left(\begin{array}{ccc}
      0.216 \\
      0.100 \\
      0.158 \\
 \end{array}
 \right),
\hK_1=
 \left(\begin{array}{ccc}
 	  0.201 \\
 	  0.100 \\
 	  0.169 \\
 \end{array}
 \right),
\hK_2=
 \left(\begin{array}{ccc}
 	  0.179 \\
      0.100 \\
      0.183 \\
 \end{array}
 \right),\\
&\hK_3=\left(\begin{array}{ccc}
      0.149 \\
      0.100 \\
      0.203 \\
 \end{array}
 \right),
\hK_4=
  \left(\begin{array}{ccc}
      0.108 \\
      0.100 \\
      0.231 \\
 \end{array}
 \right),
\end{align*}
\begin{align*}
&\bK_0 = \left(\begin{array}{ccc}
      0.100 \\
      0.5 \\
      0.100\\
\end{array}
\right),
\bK_1=\left(\begin{array}{ccc}
      0.100 \\
      0.500 \\
      0.100 \\
\end{array}
\right),
\bK_2=\left(\begin{array}{ccc}
      0.100 \\
      0.496 \\
      0.100 \\
\end{array}
\right),\\
&\bK_3=\left(\begin{array}{ccc}
      0.100 \\
      0.480 \\
      0.100 \\
\end{array}
\right),
\bK_4=\left(\begin{array}{ccc}
      0.100 \\
      0.458 \\
      0.100 \\
\end{array}
\right),
\end{align*}
\normalsize
with $\hG_0=3.474$, $\hG_1=3.240$, $\hG_2=2.920$, $\hG_3=2.482$, $\hG_4=1.881$,
$\hG_5=1$ and $\bG_0=3.250$, $\bG_1=3.030$, $\bG_2=2.729$, $\bG_3=2.319$, $\bG_4=1.760$, $\bG_5=1$.
Furthermore, the optimal cost is $V_0(x_0)$ $=$ $3.474 x_0^2 \1_{\{x_0 \geq 0\}}$ $+$ $3.250 x_0^2 \1_{\{x_0 < 0\}}$.

We then extend the control horizon to $T=\infty$ and consider the problem $(\cP_{\LQ}^{\infty})$. From Algorithm \ref{alg:hGbG} and Theorem \ref{thm_PLQinf}, we can compute $\hG^{*}$ $=$ $4.111$ and $\bG^{*}$ $=$ $3.856$, which solve the extended Riccati equations (\ref{def_hatg}) and (\ref{def_barg}). The corespondent optimal stationary control is
\begin{align*}
\u_t^{\infty}(x_t)=\hK^* x_t \1_{\{x_t \geq 0\}} - \bK^* x_t \1_{\{x_t < 0\}},
\end{align*}
for all $t=0,\cdots, \infty$, where
\begin{align*}
\hK^*=\left(\begin{array}{ccc}
    0.259 \\
    0.100 \\
    0.130 \\
\end{array}\right),~\bK^*=\left(\begin{array}{ccc}
    0.100 \\
    0.500 \\
    0.100 \\
\end{array}\right).
\end{align*}
In Figure \ref{fig_examIID}, Sub-figure (a)  plots the outputs of $\hat{G}^*$ and $\bar{G}^*$ with respect to the iteration in Algorithm \ref{alg:hGbG} and Sub-figure (b) plots the state trajectory of 100 sample pathes by implementing the stationary control $\u^{\infty}_t(x_t)$. We can observe that $x_t^*$ converges to $0$ very quickly and the correspondent closed loop system is asymptotically stable.

\begin{figure}
  \centering
  \includegraphics[width=250pt]{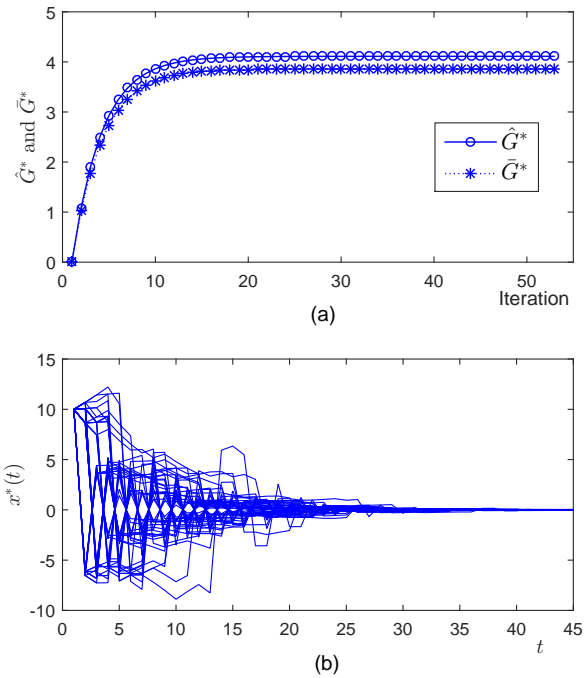}\\
  \caption{Subfigure (a) plots the outputs $\hG^*$ and $\bG^*$ from Algorithm \ref{alg:hGbG}. Subfigure (b) plots the state trajectory for 100 samples of the simulation by implementing the stationary optimal control $u_t^*$ in Example \ref{exam_control}}\label{fig_examIID}
\end{figure}

\textbf{Case 2:} In the previous case, the random matrices $A_t$ and $\B_t$ are independent over time. Now, we consider a simple case when $A_t$ and $\B_t$ are correlated between consecutive periods. Although $A_t$ and $\B_t$ are still assumed to take the values in the $5$ scenarios given in (\ref{exam_control_A}) and (\ref{exam_control_B}),  the scenarios transit among themselves according to a Markov chain with one-step probability,
\begin{align}
P=\left(
    \begin{array}{ccccc}
     0.1 & 0.2 & 0.4 & 0.2 & 0.1\\
     0.4 & 0.1 & 0.3 & 0.1 & 0.1\\
     0.2 & 0.2 & 0.1 & 0.2 & 0.3\\
     0.1 & 0.4 & 0.1 & 0.1 & 0.3\\
     0.1 & 0.2 & 0.1 & 0.5 & 0.1
    \end{array}
  \right).\label{exam_control_prob}
\end{align}
Note that under such a situation, when we compute $\{\hG_t\}|_{t=0}^{T}$ and $\{\bG_t\}|_{t=0}^T$ by (\ref{def_hatGt}) and (\ref{def_barGt}), we actually need to compute the conditional expectation for different scenarios (see the definitions in (\ref{def_hgt}) and (\ref{def_bgt})). Thus, we use the notations $\hG_{t}(j)$ and $\bG_t(j)$ to denote the outputs of (\ref{def_hatGt}) and (\ref{def_barGt}) for scenario $j$ $\in$ $\{1,2,3,4,5\}$. Table \ref{table_exam_LQ} provides the values of $\hG_t(j)$ and $\bG_t(j)$ for all $t$$=$$0,1,2,3$ and $j$ $=$ $1,2,3,4,5$ (We do not list $\hG_4(j)$ and $\bG_4(j)$ since $\hG_4(j)$ $=$ $\bG_4(j)$ $=$ $q_4$ for all $j$).

\begin{table}
  \centering
  \begin{tabular}{ccccc}
     \hline
      $j$   & $\hG_0(j),\bG_0(j)$ & $\hG_1(j),\bG_1(j)$ & $\hG_2(j),\bG_2(j)$ & $\hG_3(j),\bG_3(j)$ \\
      \hline\hline
    j=1 & (3.236, 3.010) & (2.915, 2.711) & (2.473, 2.302) & (1.873, 1.747) \\
    j=2 & (3.029, 2.842) & (2.742, 2.568) & (2.348, 2.193) & (1.805, 1.684) \\
    j=3 & (3.353, 3.155) & (3.017, 2.835) & (2.556, 2.400) & (1.921, 1.802) \\
    j=4 & (3.147, 2.928) & (2.840, 2.643) & (2.422, 2.254) & (1.844, 1.720) \\
    j=5 & (3.339, 3.113) & (3.005, 2.803) & (2.548, 2.379) & (1.927, 1.803) \\
     \hline
     \hline
   \end{tabular}
  \caption{The output $\hG_t(j)$ and $\bG_t(j)$ from iteration (\ref{def_hatGt}) and (\ref{def_barGt}) when the process $A_t$ and $\B_t$ follow a Markov Chain in Example \ref{exam_control} }\label{table_exam_LQ}
\end{table}

\end{example}

\subsection{Application to Dynamic Mean-Variance Portfolio Selection}
\begin{example}\label{exam_MV}
We consider a similar market setting given in \cite{CuiGaoLiLi:2014}, where there are one riskless asset and three risky assets and the market does not allow shorting. The initial wealth is $x_0 = 100$ and the investment horizon is $T$$=$$4$. The return rate of the riskless asset is $r_t = 1$ for $t = 0,1,2,3$. The access return $\P_t$ takes one of the 5 states given in (\ref{exam_control_B}). Different from the setting given in \cite{CuiGaoLiLi:2014}, which assumes independent $\{\P_t\}|_{t=0}^{T-1}$ over time, we assume that $\{\P_t\}|_{t=0}^{T-1}$ evolves according to a Markov Chain with the one-step transition probability matrix given in (\ref{exam_control_prob}). More specifically, if the current state is $i\in\{1,2,\cdots, 5\}$ at  time $t$, in the next time period $t+1$, the excess return $\P_{t+1}$ could be one the $5$ scenario given in (\ref{exam_control_B}) with the probability specified in the $i$th row of the matrix $P$ in (\ref{exam_control_prob}). The penalty matrix $\R_t$ is set as $\R_t=10^{-6} \I_{3}$ for $t$ $=$ $0,\cdots,3$.

Suppose the expected target wealth level set at $x_d=x_0(1+6\%)=106$. By using Theorem \ref{thm_MV}, we can compute $\lambda^*=-0.7824$  and the optimal portfolio policy as
\begin{align*}
\u_t^*(i)=\begin{dcases}
\Big(x_t-106.78 \Big)\hK^{\MV}_t(i) &~\textrm{if}~~x_t-106.78\geq 0,\\
-\Big(x_t-106.78\Big)\bK^{\MV}_t(i) &~\textrm{if}~~x_t-106.78<0,
\end{dcases}
\end{align*}
for $i\in \{1,\cdots, 5\}$, $t=0,1,2,3$. Since $\{\P_t\}|_{t=0}^{T-1}$ are correlated over time, in the optimal portfolio policy, vectors $\hK_t^{\MV}(i)$ and $\bK_t^{\MV}(i)$ depend on the current state $i$ of the Markov Chain. Due to the page limit we only list the solutions $\hK_t^{\MV}(i)$  and $\bK_t^{\MV}(i)$, at time stage $t=0$, for $i=1,\cdots,5$, as follows,
\begin{align*}
&\hK_0^{MV}(i)\in \big\{\left(\begin{array}{c}
                         0.307
                          \\
                         0 \\
                         0 \\
                       \end{array}
                     \right) \left(
                                    \begin{array}{c}
                                      0 \\
                                      0 \\
                                      4.29 \\
                                    \end{array}
                                  \right) \left(
        \begin{array}{c}
          0 \\
          4.41 \\
          0 \\
        \end{array}
      \right)\\
&~~\left(
                        \begin{array}{c}
                          2.03 \\
                          7.15 \\
                          0 \\
                        \end{array}
                      \right) \left(
         \begin{array}{c}
           0 \\
           0 \\
           1.96 \\
         \end{array}
       \right)  \big\},\\
&\bK_0^{MV}(i)\in \big\{\left(\begin{array}{c}
                         33.66
                          \\
                         0 \\
                         46.30 \\
                       \end{array}
                     \right)
\left(
\begin{array}{c}
                                      41.86 \\
                                      1.467 \\
                                      47.08 \\
                                    \end{array}
                                  \right) \left(
        \begin{array}{c}
          40.48 \\
          0 \\
          46.14 \\
        \end{array}
      \right)\\
&~~~\left(
                        \begin{array}{c}
                          36.33 \\
                          0 \\
                          40.83 \\
                        \end{array}
                      \right) \left(
         \begin{array}{c}
           31.50 \\
           2.92 \\
           34.14 \\
         \end{array}
       \right)  \big\}.
\end{align*}
The derived policy $\u^*_t$ is of a feedback nature, i.e., it dependents on the current state information $i$, the time stage $t$ and current wealth level $x_t$. This policy also goes beyond the classical one derived in \cite{CuiGaoLiLi:2014}, in which it not only assumes the independency of the excess return $\P_t$ over time, but also assumes $\E[\P_t]>0$ for all $t$. These assumptions limit the usage of the results derived in \cite{CuiGaoLiLi:2014}, since $\E[\P_t]$ may not always positive in the real market.

\end{example}

\section{Conclusion}\label{se_conclusion}
In this paper, we have developed the analytical optimal control policy for constrained LQ optimal control for scalar-state stochastic systems with multiplicative noise. Most significantly, our formulation can deal with a very general class of linear constraints on state and control variables, which includes the cone constraints, positivity and negativity constraints, and the state-dependent upper and lower bound constraints as its special case. The novel finding on the state-separation property of this kind of models plays a key role in facilitating the derivation of the optimal control policy off line. Different from the classical LQ control model, in which the Riccati equation characterizes the optimal control policy, for our constrained LQ model, we need to invoke two coupled Riccati equations to characterize the optimal control policy.  Besides the problem with a finite control horizon, we extend our results to problems with infinite horizon. We find that the \text{Uncertainty Threshold Principle} \cite{AthanKuGershwin:1977} fails to apply for problems with control constraints. Instead, we provide an algorithm and a sufficient condition to check whether a stationary control policy exists. We have also showed that under the stationary control policy, the closed-loop system is $L^2$-asymptotically stable. Numerical examples demonstrate the effectiveness in implementing our method to solve the constrained control problems and the dynamic constrained mean-variance portfolio optimization problems.


%

\appendices

\section*{Acknowledgment}

\ifCLASSOPTIONcaptionsoff
  \newpage
\fi



%

\bibliographystyle{IEEEtran}
\bibliography{cone_LQnew}

\begin{thebibliography}{10}
\providecommand{\url}[1]{#1}
\csname url@samestyle\endcsname
\providecommand{\newblock}{\relax}
\providecommand{\bibinfo}[2]{#2}
\providecommand{\BIBentrySTDinterwordspacing}{\spaceskip=0pt\relax}
\providecommand{\BIBentryALTinterwordstretchfactor}{4}
\providecommand{\BIBentryALTinterwordspacing}{\spaceskip=\fontdimen2\font plus
\BIBentryALTinterwordstretchfactor\fontdimen3\font minus
  \fontdimen4\font\relax}
\providecommand{\BIBforeignlanguage}[2]{{%
\expandafter\ifx\csname l@#1\endcsname\relax
\typeout{** WARNING: IEEEtran.bst: No hyphenation pattern has been}%
\typeout{** loaded for the language `#1'. Using the pattern for}%
\typeout{** the default language instead.}%
\else
\language=\csname l@#1\endcsname
\fi
#2}}
\providecommand{\BIBdecl}{\relax}
\BIBdecl

\bibitem{GaoLiCuiWang:2015}
J.~J. Gao, D.~Li, X.~Y. Cui, and S.~Y. Wang, ``Time cardinality constrained
  mean-variance dynamic portfolio selection and market timing: A stochastic
  control approach,'' \emph{Automatica}, vol.~54, pp. 91--99, 2015.

\bibitem{Costabook:2007}
O.~L.~V. Costa, M.~D. Fragoso, and R.~P. Margues, \emph{Discrete-time Markov
  Jump Linear Systems}.\hskip 1em plus 0.5em minus 0.4em\relax Berline,
  Springer, 2007.

\bibitem{Primb:2009}
J.~A. Primbs and C.~H. Sung, ``Stochastic receding horizon control of
  contrained linear systems with state and control multiplicative noise,''
  \emph{IEEE Trans. Automat. Control}, vol.~54, no.~2, pp. 221--230, 2009.

\bibitem{Basin:2006}
M.~Basin, J.~Perez, and M.~Skliar, ``Opitmal filtering for polynomial system
  wtates with polynomial multiplicative noise,'' \emph{Internat. J. Robust
  Nonlinear Control}, vol.~16, pp. 303--314, 2006.

\bibitem{Gershon:2006}
E.~Gershon and U.~Shaked, ``Static ${H}_2$ and ${H}_{\infty}$ output-feedback
  of discrete-time {LTI} systems with state multiplicative noise,''
  \emph{Systems Control Lett.}, vol.~55, pp. 232--239, 2006.

\bibitem{LimZhou:1999}
A.~B.~E. Lim and X.~Y. Zhou, ``Stochastic optimal control {LQR} control with
  integral quadratic constraints and indefinite control weights,'' \emph{IEEE
  Trans. Automat. Control}, vol.~44, no.~7, pp. 1359--1369, 1999.

\bibitem{LiZhouRami:2003}
X.~Li, X.~Y. Zhou, and M.~A. Rami, ``Indefinite stochastic linear quadratic
  control with markovian jumps in infinite time horizon,'' \emph{J. Global
  Optim.}, vol.~27, pp. 149--175, 2003.

\bibitem{Zhu:2005}
J.~Zhu, ``On stochastic riccati equations for the stochastic {LQR} problem,''
  \emph{Systems Control Lett.}, vol.~54, pp. 119--124, 2005.

\bibitem{RamiChenMooreZhou:2001}
M.~A. Rami, X.~Chen, J.~B. Moore, and X.~Y. Zhou, ``Solvability and asymptotic
  behavior of generalized riccati equations arising in indefinite stochastic
  {LQ} controls,'' \emph{IEEE Trans. Automat. Control}, vol.~46, no.~3, pp.
  428--440, 2001.

\bibitem{Costa:2007}
O.~L.~V. Costa and W.~L. Paulo, ``Indefinite quadratic with linear cost optimal
  control of markovian jump with multiplicative noise systems,''
  \emph{Automatica}, vol.~43, pp. 587--597, 2007.

\bibitem{LiNg:2000}
D.~Li and W.~L. Ng, ``Optimial dynamic portfolio selection: multiperiod
  mean-variance formulation.'' \emph{Math. Finance}, vol.~10, pp. 387--406,
  2000.

\bibitem{ZhouLi:2000}
X.~Y. Zhou and D.~Li, ``Continuous-time mean-variance portfolio selection: A
  stochastic {LQ} framework,'' \emph{Appl. Math. Optim.}, vol.~42, no.~1, pp.
  19--33, 2000.

\bibitem{Markowitz:1969}
H.~M. Markowitz, ``Portfolio selection,'' \emph{J. Finance}, vol.~7, pp.
  1063--1070, 1969.

\bibitem{CuiLiLi:2014}
X.~Y. Cui, X.~Li, and D.~Li, ``Unified framework of mean-field formulations for
  optimal multi-period mean-variance portfolio selection,'' \emph{IEEE Trans.
  Automat. Control}, vol.~59, no.~7, 2014.

\bibitem{GaoLi:2011}
J.~J. Gao and D.~Li, ``Cardinality constrained linear quadratic optimal
  control,'' \emph{IEEE Trans. Automat. Control}, vol.~56, no.~8, pp.
  1936--1941, 2011.

\bibitem{Bemporad:2002}
A.~Bemporad, M.~Morari, V.~Dua, and E.~N. Pistikopoulos, ``The explicit linear
  quadratic regulator for constrained systems,'' \emph{Automatica}, vol.~38,
  no.~1, pp. 3--20, 2002.

\bibitem{Campbell:1982}
S.~L. Campbell, ``On positive controllers and linear quadratic optimal control
  problems,'' \emph{Internat. J. Control}, vol.~36, no.~5, pp. 885--888, 1982.

\bibitem{HeemelsEijndhovenStoorvogel:1998}
W.~P. Heemels, S.~V. Eijndhoven, and A.~A. Stoorvogel, ``Linear quadratic
  regulator problem with positive controls,'' \emph{Internat. J. Control},
  vol.~70, no.~4, pp. 551--578, 1998.

\bibitem{Mayne:2014}
D.~Q. Mayne, ``Model predictive control: Recent developments and future
  promise,'' \emph{Automatica}, vol.~50, no.~12, pp. 2967--2986, 2014.

\bibitem{Kouvaritakis:2014}
B.~Kouvaritakis and M.~Cannon, \emph{Encyclopedia of Systems and
  Control}.\hskip 1em plus 0.5em minus 0.4em\relax London, U. K.:
  Springer-Verlag, 2014, ch. Stochastic model predictive control, pp. 1--9.

\bibitem{BernardiniBemporad:2012}
D.~Bernardini and A.~Bemporad, ``Stabilizing model predictive control of
  stochastic constrained linear systems,'' \emph{IEEE Trans. Automat. Control},
  vol.~57, no.~6, pp. 1468--1480, 2012.

\bibitem{Patrinos:2014}
P.~Patrinos, P.~Sopasakis, H.~Sarimveis, and A.~Bemporad, ``Stochastic model
  predictive control fro constrained discrete-time markovian switching
  systems,'' \emph{Automatica}, vol.~50, pp. 2504--2514, 2014.

\bibitem{ChanHsuSethi:2002}
S.~Chand, V.~N. Hsu, and S.~Sethi, ``Forescast, solution, and rolling horizons
  in operations management problems: A classified bibliography,''
  \emph{Manufact. Service Oper. Manag.}, vol.~4, pp. 25--43, 2002.

\bibitem{Mesbah:2016}
A.~Mesbah, ``Stochastic model predictive control: An overview and perspectives
  for future research,'' \emph{IEEE Control Systems}, vol.~36, no.~6, pp.
  30--44, 2016.

\bibitem{LiZhouLim:2002}
X.~Li, X.~Y. Zhou, and A.~E.~B. Lim, ``Dynamic mean-variance portfolio
  selection with no shorting constraints,'' \emph{SIAM J. Control Optim.},
  vol.~40, no.~5, pp. 1540--1555, 2002.

\bibitem{HuZhou:2005}
Y.~Hu and X.~Y. Zhou, ``Constrained stochastic lq control with random
  coefficients, and application to portfolio selection,'' \emph{SIAM J. Control
  Optim.}, vol.~44, no.~2, pp. 444--466, 2005.

\bibitem{CuiGaoLiLi:2014}
X.~Y. Cui, J.~J. Gao, X.~Li, and D.~Li, ``Optimal multi-period mean-variance
  policy under no-shorting constraint,'' \emph{Euro. J. Oper. Res.}, vol. 234,
  no.~2, pp. 459--468, 2014.

\bibitem{CuiLiLi:2015}
X.~Y. Cui, D.~Li, and X.~Li, ``Mean-variance policy for discrete-time cone
  constrained markets: The consistency in efficiency and minimum-variance
  signed supermartingale measure,'' \emph{Math. Finance}, 2015.

\bibitem{FollmerSchied:2004}
H.~F\"{o}llmer and A.~Schied, \emph{Stochastic Finance: An Introduction in
  Discrete Time}, ser. De Gruyter Studies in Mathematics.\hskip 1em plus 0.5em
  minus 0.4em\relax Walter De Gruyter:Berlin, 2004.

\bibitem{Costa:2012}
O.~L.~V. Costa and A.~D. Oliveira, ``Optimal mean-variance control for
  discrete-time systems with markovian jumps and multiplicative noises,''
  \emph{Automatica}, vol.~48, pp. 304--315, 2012.

\bibitem{Costa:2008}
O.~L.~V. Costa and M.~V. Araujo, ``A generalized multi-period mean-variance
  portfolio with markov switching parameters,'' \emph{Automatica}, vol.~44, pp.
  2487--2497, 2008.

\bibitem{ChenKouWang:2016}
N.~Y. Chen, S.~Kou, and C.~Wang, ``A partitioning algorithm for markov decision
  processes with applications to market microstructure,'' \emph{Management
  Sci.}, 2017.

\bibitem{Kellenberg}
O.~Kellenberg, \emph{Foundation of modern probability(2nd Edition)}.\hskip 1em
  plus 0.5em minus 0.4em\relax New York: Springer, 2002.

\bibitem{Rockafellar:1970}
R.~T. Rockafellar, \emph{Convex Analysis}.\hskip 1em plus 0.5em minus
  0.4em\relax New Jersey: Princeton University Press, 1970.

\bibitem{Bertsekas:2005}
D.~P. Bertsekas, \emph{Dynamic Programming and Optimal Control I, II}.\hskip
  1em plus 0.5em minus 0.4em\relax Athena Scientific, 2005.

\bibitem{AthanKuGershwin:1977}
M.~Athans, R.~Ku, and S.~B. Gershwin, ``The uncertainty threshold principle:
  Some fundamental limitation of optimal decision making under dynamic
  uncertainty,'' \emph{IEEE Trans. Automat. Control}, vol.~22, pp. 491--495,
  1977.

\bibitem{Yaz:1990}
E.~Yaz, ``A generalization of the uncertainty threshold principle,'' \emph{IEEE
  Trans. Automat. Control}, vol.~35, no.~8, pp. 942--944, 1990.

\end{thebibliography}

\end{document}